\titleformat*{\section}{\large\bfseries}
\titleformat*{\subsection}{\it}
\newtheorem{corollary}{Corollary}
\newtheorem{lemma}{Lemma}
\newtheorem{proposition}{Proposition}
\theoremstyle{definition}
\newcommand{\mS}{\mathcal{S}}
\renewcommand{\hat}{\widehat}
\renewcommand{\tilde}{\widetilde}
\newcommand{\Pt}{\Phi_t}
\newcommand{\at}{\widetilde{a}}
\newcommand{\bt}{\widetilde{b}}
\def\diag{\mathop{\rm diag}\nolimits}
\title{\bf Functional Horseshoe Smoothing \\for Functional Trend Estimation}
\author{}
\date{}
\begin{document}

\maketitle
\doublespacing

\vspace{-1.5cm}
\begin{center}
{\large 
Tomoya Wakayama$^{\ast}$\footnote{Corresponding author, Email: tom-w9@g.ecc.u-tokyo.ac.jp}
and Shonosuke Sugasawa$^{\dagger}$
}

\medskip

\medskip
\noindent
$^{\ast}$Graduate School of Economics, The University of Tokyo\\
$^{\dagger}$Center for Spatial Information Science, The University of Tokyo
\end{center}

\begin{abstract}
Due to developments in instruments and computers, functional observations are increasingly popular.
However, effective methodologies for flexibly estimating the underlying trends with valid uncertainty quantification for a sequence of functional data (e.g. functional time series) are still scarce.  
In this work, we develop a locally adaptive smoothing method, called functional horseshoe smoothing, by introducing a shrinkage prior to the general order of differences of functional variables.
This allows us to capture abrupt changes by making the most of the shrinkage capability and also to assess uncertainty by Bayesian inference.
The fully Bayesian framework also allows the selection of the number of basis functions via the posterior predictive loss.
Also, by taking advantage of the nature of functional data, this method is able to handle heterogeneously observed data without data augmentation.
Simulation studies and real data analysis demonstrate that, in comparison with alternative approaches, the proposed method achieves better smoothing accuracy and it has desirable properties.
\end{abstract}

\noindent%
{\it Keywords:} functional time series; shrinkage prior; Markov chain Monte Carlo; tail robustness; trend filtering
\vfill

\newpage

\section{Introduction}

Owing to the remarkable development of measuring instruments and computers in recent years, it has become possible to obtain high-dimensional data in various fields. On the other hand, the analysis of such data using classical multivariate analysis requires a huge number of parameters, which is an obstacle in extracting valuable information from the data. A promising methodology to solve these problems is functional data analysis (FDA), which treats and analyzes high-dimensional data as a curve (function). Functional versions for many branches of statistics have been provided such as \cite{ramsay2004functional}, \cite{kokoszka2017introduction}, and \cite{horvath2012inference}, and it is a field that has been the focus of much research.

The traditional FDA approach for independent functional data has recently been extended to time series.
In fact, for functional time series data, the standard stationary model for multivariate data has been extended \citep[e.g.][]{besse2000autoregressive,klepsch2017innovations,klepsch2017prediction,hormann2013functional,gao2019high,hormann2015dynamic} and its theoretical properties have been extensively studied \citep[e.g.][]{bosq2000linear,aue2017estimating,spangenberg2013strictly,aue2017functional,kuhnert2020functional,cerovecki2019functional}.
However, in actual data such as GDP, the assumption of stationarity is not satisfied in many situations because the expected value varies from period to period. There are a few cases where the trend can be analyzed appropriately by existing methods.

Due to the stationarity of conventional FDA methods, they cannot capture rapid changes in trend estimation, but \cite{wakayama2021locally} solved this difficulty by developing a new type of lasso and proposing functional trend filtering. This new method can capture local changes while removing observation errors in the data. In other words, the method can clearly identify when structural changes occur in time series data. In order for the inferred results to be used for decision-making,  it is essential to evaluate the interpretability of the model and the uncertainty of the estimation. However, few methods for uncertainty evaluation in functional time series analysis have been developed \citep{petris2013bayesian,canale2016bayesian}.

In this work, we develop the other approach in a Bayesian framework, hoping to assess uncertainty and estimate trend accurately and flexibly as seen in univariate models \citep{faulkner2018locally}.
In the context of FDA, a shrinkage prior on the functional space has been introduced by \cite{shin2020functional}.
Utilizing a similar idea, we construct a locally adaptive smoother for functional data via the shrinkage prior. Because the priors in the model can be represented as scale mixture of normals, it is easy to implement with the Gibbs sampler, and minor extensions make heterogeneously observed data analysis possible. 
We derive posterior predictive loss based on \cite{gelfand1998model} to choose an appropriate number of basis functions.

We also discuss the theoretical justification for this approach. The essence of this method is that it removes noise while leaving the change points large.
The property of the prior that keeps the signal from shrinking is called "tail robustness". For time series data, the analysis of tail robustness is complicated, but instead, we have shown the properties proposed by \cite{okano2022locally}.
This is not an argument closed to functional data, but also justifies a locally adaptive method for finite-dimensional data \citep{faulkner2018locally,kakikawa2022bayesian}.

The remainder of the paper is structured as follows. 
Section \ref{sec:set} introduces the setting and model for trend estimation. Section \ref{sec:pos} gives the posterior computation algorithm. In Section \ref{sec:sel}, we present the way to select the number of basis functions. Section \ref{sec:tai} discusses the theoretical properties of the proposed prior and its posterior distribution. In Section \ref{sec:sim}, we investigate the performance of the proposed methods for homogeneously observed data and heterogeneously observed data. We apply our method to a real dataset in Section \ref{sec:app}. The contribution of the article is discussed in Section \ref{sec:dis}.

\section{Functional Horseshoe Smoothing}\label{sec:BTF}

\subsection{Settings and models}\label{sec:set}
Let $Y_1(\cdot),\ldots, Y_T(\cdot)$ be observed functional data on $\mathcal{S}\subset \mathbb{R}$, ordered as $t=1,\ldots,T$.
Suppose that we are interested in the mean function $Z_t(\cdot)\equiv {\rm E}[Y_t(\cdot)]$ a that may change smoothly or abruptly over $t$.  
To estimate $Z_t$, we employ the following measurement error model: 
$$
Y_t(s)=Z_t(s)+\varepsilon_t(s), \ \ \ \ \varepsilon_t(s)\sim N(0,\sigma^2), \ \ \ \ t=1,\ldots,T,\ \ \ \ s\in\mS,
$$
where $\varepsilon_t(s)$ are error terms independent over different values of $t$ and $s$, and $\sigma^2$ is an unknown variance.
Such measurement models are adopted in the context of Bayesian modeling of functional data \citep{yang2016smoothing,yang2017efficient}.

Let $\phi_1(\cdot),\ldots,\phi_L(\cdot)$ be basis functions on $\mathcal{S}$ (e.g. B-spline function) common over $t$.
We model $Z_t(\cdot)$ as
$$
Z_t(\cdot)=\sum_{\ell=1}^L b_{t\ell} \phi_{\ell}(\cdot), \ \ \ \ \ t=1,\ldots,T,
$$
where $\bm{b}_t=(b_{t1},\ldots,b_{tL})^\top$ is a vector of coefficients, and $L$ is the number of basis functions.  
Thus, heterogeneity of the mean function $Z_t(s)$ is characterized by the heterogeneous coefficients, $\bm{b}_t$.
The choice of $L$ controls the smoothness of the estimates of $Z_t(\cdot)$, and we later discuss a data-dependent selection of $L$.

Let $Y_t(s_{t1}),\ldots, Y_t(s_{tn_t})$ be discrete observations, where the $s_{t1},\ldots,s_{tn_t}$ are the observation points and $n_t$ is the number of discrete observations.
Note that we allow the number of sampling points and the sampled locations to be heterogeneous over $t$. 
Under the settings described above, the model for $\bm{y}_t=(Y_t(s_{t1}),\ldots,Y_t(s_{t{n_t}}))^\top$ is 
$$
\bm{y}_t=\Pt \bm{b}_t + \bm{\varepsilon}_t, \ \ \ \ \ \bm{\varepsilon}_t\sim N(\bm{0}, \sigma^2 I_{n_t}), \ \ \ \ t=1,\ldots,T, 
$$
where $\Pt$ is a $n_t\times L$ matrix whose $(i,\ell)$-element is $\phi_\ell(s_{ti})$.

Now, we consider prior distributions on $\bm{b}_t$. $\otimes$ denotes Kroneker product.
Let $\Delta_k$ be $k$th order forward difference operators defined as \begin{align*}
\Delta_k= \begin{cases}
    D^{(0)} & {\rm for} \,\,k=0 ,\\
    D^{(k)}\Delta_{k-1} & {\rm for} \,\,k\geq 1,
 \end{cases}
\end{align*}
where $D^{(k)}$ is the following $ (T-k-1)L \times (T-k)L$ matrix:
\begin{align*}
    D^{(k)} &=\left(
        \begin{array}{cccccc}
          1 & -1& 0&\ldots & 0 &0\\
          0 & 1 & -1&\ldots & 0 &0\\
          \vdots & \vdots&\vdots & \ddots & \vdots&\vdots \\
          0 & 0&0 & \ldots & 1&-1
        \end{array}
        \right) \otimes I_L.
\end{align*}
We then define $(\bm{\delta}_1,\ldots,\bm{\delta}_{T-k-1})^{\top}= \Delta_k (\bm{b}_1,\ldots,\bm{b}_T)^{\top}$ and consider a model for this. Although $\bm{\delta}_t$ depends on $k$, we write $\bm{\delta}_t$ rather than $\bm{\delta}_t^{(k)}$ for simplicity.
For example, if $k=0$, each $\bm{\delta}_t$ can be written as $\bm{\delta}_t=\bm{b}_{t}-\bm{b}_{t+1}$, for $t=1,\ldots,T-1$. 
Hence, if the vector $\bm{\delta}_t$ is shrunk toward the origin, the two adjacent coefficient vector $\bm{b}_{t+1}$ and $\bm{b}_{t}$ are identical, leading to the same mean functions for $Z_t(\cdot)$ and $Z_{t+1}(\cdot)$.
To encourage such structures, we introduce shrinkage priors on $\bm{\delta}_t$, which has a large mass around the origin, whereas the tail of the prior is sufficiently large to allow possible abrupt changes over $t$.
Note that under general $k$, shrinking $\bm{\delta}_t$ toward the origin can be regarded as the smoothing of $k+1$th order derivatives of $Z_t(\cdot)$ with respect to $t$. 
We introduce the following hierarchical prior for $\bm{\delta}_t$:
\begin{equation}\label{prior}
\begin{split}
\bm{\delta}_t|\lambda_t,\tau,\sigma \sim N(\bm{0},\sigma^2\tau^2\lambda_t^2(\Pt^{\top}\Pt)^{-1}), \ \ \ 
\lambda_t\sim C^{+}(0, 1),   \ \ \  \tau\sim C^{+}(0, 1), 
\end{split}
\end{equation}
where $C^{+}(0, 1)$ denotes the standard half-Cauchy distribution.
The conditional prior of $\bm{\delta}_t$ has a form of $g$-prior \citep{zellner1986assessing}, and a hierarchical prior similar to (\ref{prior}) is adopted in \cite{shin2020functional} in the context of linear regression models. 
Here $\lambda_t$ is a local shrinkage parameter that controls the amount of shrinkage, and $\lambda_t$ is common to all components of $\bm{\delta}_t$ so that the vector $\bm{\delta}_t$ can be simultaneously shrunk toward the origin.
The use of the half-Cauchy for the local parameter $\lambda_t$ leads to a multivariate horseshoe-like prior for $\bm{\delta}_t$. The prior distribution has favorable properties, given in Section \ref{sec:tai}, and we make the most of them to handle sparsity.

\subsection{Posterior computation algorithm }\label{sec:pos}
The joint posterior distribution is given by 
\begin{align*}
\pi(\sigma^2)\pi(\tau^2)\prod_{t=1}^Tp(\bm{y}_t;\Pt\bm{b}_t,\sigma^2 I_n)
\prod_{t=1}^{T-k-1}p(\bm{\delta}_t; \bm{0}, \sigma^2\tau^2\lambda_t^2(\Pt^\top\Pt)^{-1})\pi(\lambda_t),
\end{align*}
where $\pi(\sigma^2)$, $\pi(\tau^2)$ and $\pi(\lambda_t)$ are prior distributions for $\sigma^2, \tau^2$ and $\lambda_t^2$, respectively. 
The priors of $\tau^2$ and $\lambda_t^2$ are defined in (\ref{prior}), and we use the conjugate prior, $\sigma^2\sim {\rm IG}(a_{\sigma}, b_{\sigma})$ given hyperparameters $a_{\sigma}$ and $b_{\sigma}$.
Using the data augmentation technique of the horseshoe prior \citep[e.g.][]{makalic2015simple}, sampling from the joint posterior can be carried out by a simple Gibbs sampling described as follows:

\begin{itemize}
\item[-]
(Sampling from $\sigma^2$) \ \ 
The full conditional distribution of $\sigma^2$ is ${\rm IG}(\at_{\sigma}, \bt_{\sigma})$, where 
\begin{align*}
&\at_\sigma=a_{\sigma}+\frac12L(T-k-1)+\frac12nT, \\ 
&\bt_\sigma=b_{\sigma}
+\frac12 \sum_{t=1}^T(\bm{y}_t - \Pt\bm{b}_t)^\top
(\bm{y}_t - \Pt\bm{b}_t) 
+\frac1{2\tau^2} \sum_{t=1}^{T-k-1}\frac{\bm{\delta}_t^\top \Pt^\top \Pt \bm{\delta}_t}{\lambda_t^2}.
\end{align*}

\item[-]
(Sampling from $\tau^2$) \ \ The full conditional distribution of $\tau^2$ is
$$
\mathrm{IG}\left(\frac{L(T-k-1)+1}{2},\, \frac{1}{\xi}+\sum_{t=1}^{T-k-1}\frac{\bm{\delta}_t^\top \Pt^\top \Pt \bm{\delta}_t}{2\sigma^2\lambda_t^2}  \right),
$$
where $\xi$ is an auxiliary variable whose full conditional distribution is 
$\mathrm{IG}\left(1,\, 1+1/\tau^2  \right).$

\item[-]
(Sampling from $\lambda_t^2$) \ \ The full conditional distribution of $\lambda_t^2$ is 
$$
\mathrm{IG}\left(\frac{L+1}{2},\, \frac{1}{\nu_t}+\frac{\bm{\delta}_t^\top \Pt^\top \Pt \bm{\delta}_t}{2\tau^2\sigma^2}\right),
$$
where $\nu_t$ is an auxiliary variable whose full conditional distribution is $\mathrm{IG}\left(1, 1 + 1/\lambda_t^2\right)$.

\item[-]
(Sampling from $\bm{b}_t$) \ \ 
The full conditional distribution of $\bm{b}_t$ is of the form, $N(\bm{\mu}_t, c_t(\Pt^\top \Pt)^{-1})$, where the specific forms of $\bm{\mu}_t$ and $c_t$ are dependent on $k$, the order of difference. 
The detailed expressions under $k=0$ and $k=1$ in the Appendix 1. 
\end{itemize}

As shown above, all the full conditional distributions are familiar forms; thereby the posterior computation can be efficiently carried out. 
Given the posterior samples of $\bm{b}_t$, the posterior samples of $Z_t(s)$ at arbitrary location $s\in \mathcal{S}$ can be generated, which gives point estimate (e.g. posterior mean) and interval estimation (e.g. $95\%$ credible interval).

\subsection{Selection of the number of basis functions}\label{sec:sel}
In practice, the specification of the number of basis functions, $L$, is an important task.
If $L$ is smaller than necessary, the basis function approximation gives over-smoothed results.
On the other hand, the estimation results can be inefficient if $L$ is larger than necessary.
We suggest adopting a model selection criterion to select $L$ in a data dependent manner. 
We here use posterior predictive loss (PPL) proposed by \cite{gelfand1998model}.

To clarify the number of basis used in the estimation, we write $\bm{b}_t(L)$ and $\Pt(L)$ rather $\bm{b}_t$ and $\Pt$.
Given $\bm{b}_t(L)$, the conditional distribution of $\bm{y}_t$ is $N\left(\Pt(L)\bm{b}_t(L), \sigma^2I_n \right)$. 
We then define the PPL as 
\begin{align*}
{\rm PPL}(L) &= \frac{T}{T+1}\sum_{t=1}^T \left\{ \bm{y}_t-\Pt(L){\rm E}_p[\bm{b}_t(L)]\right\}^\top
\left\{ \bm{y}_t-\Pt(L){\rm E}_p[\bm{b}_t(L)]\right\}\\
& \ \ \ \ +nT{\rm E}_p[\sigma^2] + \sum_{t=1}^T\mathrm{tr} (\Pt(L){\rm Cov}_p(\bm{b}_t(L))\Pt(L)^\top),
\end{align*}
where ${\rm E}_p$ and ${\rm Cov}_p$ are the expectation and covariance with respect to the posterior distribution. 
We choose the number of basis functions such that the criterion ${\rm PPL}(L)$ is minimized. 
The order of differences, $k$, can also be selected by PPL.

\subsection{Extension to irregular grids}
We here consider an extension to the proposed smoothing techniques under irregularly spaced functional data. 
Let $Z_{t_1}(\cdot),Z_{t_2}(\cdot),\ldots,Z_{t_n}(\cdot)$ be a sequence of functional random variables indexed by $t_i$. 
This situation requires that distance information be incorporated into the model. 
Using the same basis expansion in Section~\ref{sec:set}, we introduce the following prior:  
\begin{align}\label{irregular}
\bm{b}_{t_i+h} - \bm{b}_{t_i}|\lambda_{t_i},\tau,\sigma \sim N(\bm{0},h\sigma^2\tau^2\lambda_{t_i}^2(\Phi_{t_i}^{\top}\Phi_{t_i})^{-1}), \ \ \ \ h\geq 0
\end{align}
where we use the same prior for $\lambda_t^2$.
Although the prior formulation (\ref{irregular}) corresponds to the extension under first order difference, the second order cases can also be extended to the irregular grids, following the argument given by \cite{lindgren2008second}.

\section{Theoretical properties of the model}\label{sec:tai}
This section presents the theoretical properties of the prior distribution and its periphery. Their proofs are described in Appendix 2.

In Section \ref{sec:set}, we formulated a prior as (\ref{prior}) and now investigate it in detail.
Its marginal prior of $\lambda_t$ is 
$$
\pi(\bm{\delta}_t \mid \tau,\sigma) \propto
\int_{0}^{\infty}\frac{1}{\lambda^L_t(1+\lambda_t^2)}\exp\left\{ -\frac{1}{2\sigma^2\tau^2\lambda_t^2}\bm{\delta}^{\top}_t \Pt^{\top} \Pt \bm{\delta}_t \right\}d\lambda_t.
$$
Then, notable properties of the marginal prior are given by the following proposition.
\begin{proposition}\label{prop:pri1}
\begin{align*}
(&i)\ \ \ \ \ \ \  \pi(\bm{\delta}_t \mid \tau,\sigma)\rightarrow \infty \ \ \ \ as \ \ \ \ \bm{\delta}_t \rightarrow 0.\\
(i&i)\ \ \ \ \ \ \  \pi(\bm{\delta}_t \mid \tau,\sigma) = O \left(\|\Pt \bm{\delta}_t\|_2^{-L-1}\right)
\end{align*}
\end{proposition}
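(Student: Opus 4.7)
The plan is to reduce both parts to the asymptotic behavior of the one-dimensional integral
\begin{equation*}
I(v) = \int_0^\infty \frac{1}{\lambda^L(1+\lambda^2)} \exp\left(-\frac{cv^2}{\lambda^2}\right) d\lambda,
\end{equation*}
where $v \equiv \|\Pt \bm{\delta}_t\|_2$ and $c = 1/(2\sigma^2\tau^2) > 0$. Because $\bm{\delta}_t$ enters the marginal density only through $v$, it suffices to establish (i') $\lim_{v\to 0^+} I(v) = \infty$ and (ii') $I(v) = O(v^{-L-1})$ as $v \to \infty$.

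For (i'), I would invoke the monotone convergence theorem. For each fixed $\lambda > 0$ the factor $\exp(-cv^2/\lambda^2)$ is non-negative and increases monotonically to $1$ as $v \downarrow 0$, so
\begin{equation*}
\lim_{v\to 0^+} I(v) = \int_0^\infty \frac{d\lambda}{\lambda^L(1+\lambda^2)}.
\end{equation*}
Near $\lambda = 0$ this integrand is comparable to $\lambda^{-L}$, which is non-integrable whenever $L \geq 1$; since the basis count in the setup of Section \ref{sec:set} is always at least one, the limit is $+\infty$ and (i) follows.

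For (ii'), the key step is the rescaling $\lambda = v\mu$, which extracts an explicit power of $v$:
\begin{equation*}
I(v) = v^{1-L}\int_0^\infty \frac{\exp(-c/\mu^2)}{\mu^L(1+v^2\mu^2)}\, d\mu.
\end{equation*}
Applying the elementary bound $(1+v^2\mu^2)^{-1} \leq v^{-2}\mu^{-2}$ yields
\begin{equation*}
I(v) \leq v^{-L-1}\int_0^\infty \mu^{-L-2}\, \exp(-c/\mu^2)\, d\mu,
\end{equation*}
and the remaining integral is a finite constant: the $\exp(-c/\mu^2)$ factor tames the $\mu^{-L-2}$ singularity at the origin, while the tail $\mu^{-L-2}$ is integrable at infinity for $L \geq 1$. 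This gives $I(v) \leq C v^{-L-1}$ with $C$ independent of $v$, exactly the rate claimed.

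The main thing to verify is that the crude bound on $(1+v^2\mu^2)^{-1}$ is not too lossy. In fact it is essentially tight for the asymptotics: the exponential factor forces the effective mass of the integrand to concentrate on a region where $\mu$ is bounded away from zero, so the regime $v\mu \ll 1$ is automatically negligible as $v \to \infty$. Consequently the rate $v^{-L-1}$ cannot be sharpened, and no finer technique (Laplace's method, modified Bessel representation, etc.) is required beyond what is sketched here.
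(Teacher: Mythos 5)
Your proof is correct, and it takes a genuinely different, more elementary route than the paper's. For part (i), the paper splits into cases: $L=1$ is delegated to Theorem~1 of Carvalho et al.\ (2010), while for $L\ge 2$ the integral is lower-bounded by an incomplete-gamma expression carrying the diverging prefactor $\|\Pt\bm{\delta}_t\|_2^{-(L-1)}$; your monotone-convergence argument treats all $L\ge 1$ uniformly, avoiding both the case split and the external citation, at the price of not exhibiting any rate of divergence (which the statement does not ask for). For part (ii), the paper invokes a lemma adapted from Hamura et al.\ (Lemma S5) to replace the mixing density $v^{-L/2}(1+v)^{-1}$ by the asymptotically equivalent $v^{-L/2-1}$ inside the normal scale mixture and then evaluates the resulting gamma integral, which yields the exact asymptotic order $z^{-L-1}$ (hence implicitly a matching lower bound as well). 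Your rescaling $\lambda=v\mu$ combined with $(1+v^2\mu^2)^{-1}\le v^{-2}\mu^{-2}$ instead gives a self-contained upper bound $I(v)\le C v^{-L-1}$ valid for every $v>0$, not merely asymptotically, which is all the big-O claim requires; the constant is finite since the substitution $w=\mu^{-2}$ turns the remaining integral into $\tfrac12\,c^{-(L+1)/2}\Gamma\bigl(\tfrac{L+1}{2}\bigr)$. The only thing your argument does not deliver is the matching lower bound (your closing tightness remark is heuristic, not a proof), but that is not part of the stated proposition.
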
$(i)$ implies that, for given $\tau,\sigma^2$ and $L$, the density diverges at the origin $\bm{\delta}_t=\bm{0}$ like the original horseshoe prior \citep{carvalho2009handling,carvalho2010horseshoe}.
Conspicuously, this property strongly shrinks the trivial noises toward zero at posterior inference. On the other hand, $(ii)$ suggests that the tail decay of the marginal prior is slow. The random variables from the prior are expected to take large values with greater probability owing to the heavy tail. These critical features of the prior distribution contribute to handling sparsity.

Next, we consider the posterior mean deduced from the prior. For simplicity, we focus on $k=0$.
In this case, the model can be rewritten as
\begin{align*}
\bm{z}_t &\equiv \bm{y}_{t+1}-\bm{y}_{t}\sim N(\Pt\bm{\delta}_t, 2\sigma^2I_n), \ \ \  
\bm{\delta}_t \equiv  \bm{b}_{t+1}-\bm{b}_{t}\sim N(\bm{0},\sigma^2\lambda_t^2\tau^2(\Pt^\top \Pt)^{-1}),
\end{align*}
so that the model is defined for the observed value of difference $\bm{z}_t$.

\begin{proposition}\label{prop:tail}
The posterior mean of the model is weakly tail robust, that is,
\begin{align*}
\frac{|{\rm E}[\Phi_{t^*}\bm{\delta}_{t^*}|\bm{z}] - \bm{z}_{t^*}|}{\|\bm{z}_{t^*}\|_2}\to \bm{0} \ \ \ \ \mathrm{as}\ \ \ \  \|\bm{z}_{t^*}\|_2 \to \infty \ \ \ \ \mathrm{for}\ \ \mathrm{any}\ \ t^* \in \{1,...,T-1\}
\end{align*}
\end{proposition}
This claim implies that the difference between the posterior expectation and the original observation are relatively subtle when $\|\bm{z}_{t^*}\|_2$ is large.
This property is weaker than the tail robustness \citep{carvalho2010horseshoe}.
In our setting, the dependencies between the data make it challenging to analyze the tail robustness, but still, weak tail robustness, which implies the signal is preserved in the posterior analysis without shrinkage, holds.
The property is derived from the fact that the prior has considerable mass on the tail.
The use of $C^{+}(0,1)$ for $\lambda_t$ is motivated by this argument.

The tail robust-related properties of time series have not been shown in ordinary multivariate analysis as well as in functional data.
The result of this theorem also applies to ordinary multivariate analysis by ignoring $\Phi_t$, and this is important in the context of shrinkage estimation.

\section{Simulation Studies}\label{sec:sim}

\subsection{Simulation settings}
\label{sec:sim:pro}
We evaluate the performance of the proposed and existing methods through simulation studies. 
For $t=1,...,T(=50)$ and the domain $\mathcal{X}=[1,n]$ with $n=120$, we prepared the following four scenarios as the true trend function $\beta_t(x)$:
\begin{align*}
&\text{(1) Constant:}  \ \ \beta_t(x)=f_1(x),\\
&\text{(2) Smooth:} \ \ \beta_t(x)=f_1(x)\sin\frac{t+x}{5},\\ 
&\text{(3) Piecewise constant:} \ \ 
\beta_t(x)=\sum_{i=1}^5 f_i(x)\mathbb{I}_{\{10(i-1) < t\leq 10i\}} ,\\
&\text{(4) Varying\ smoothness:} \ \ 
\beta_t(x)=f_1(x)+20\big\{\sin\big(\frac{4t}{n}-2\big)+2\exp\big(-30\big(\frac{4t}{n}-2\big)^2\big)\big\},
\end{align*}
where $f_i$ ($i=1,\ldots,5$) is a sample path of the Gaussian process associated to RBF kernel $k_i(x_1,x_2) = \theta_i^2 \exp(-\|x_1-x_2\|^2/2\theta_i^2)$ with a hyper-parameter $\theta_i$. We set $\theta_i= 30, 20, 35, 25, 30$ for $i=1,2,\ldots,5$.

The observed data were generated by adding noises from $N(0,5^2)$ to trend functions at equally spaced $H=120$ points of $x$, namely, $x\in \{1,2,\ldots,H\}$.
Figure \ref{data_3d} shows how the trends change over time.
\begin{figure}[tb]
  \begin{center}
  \includegraphics[width=14cm]{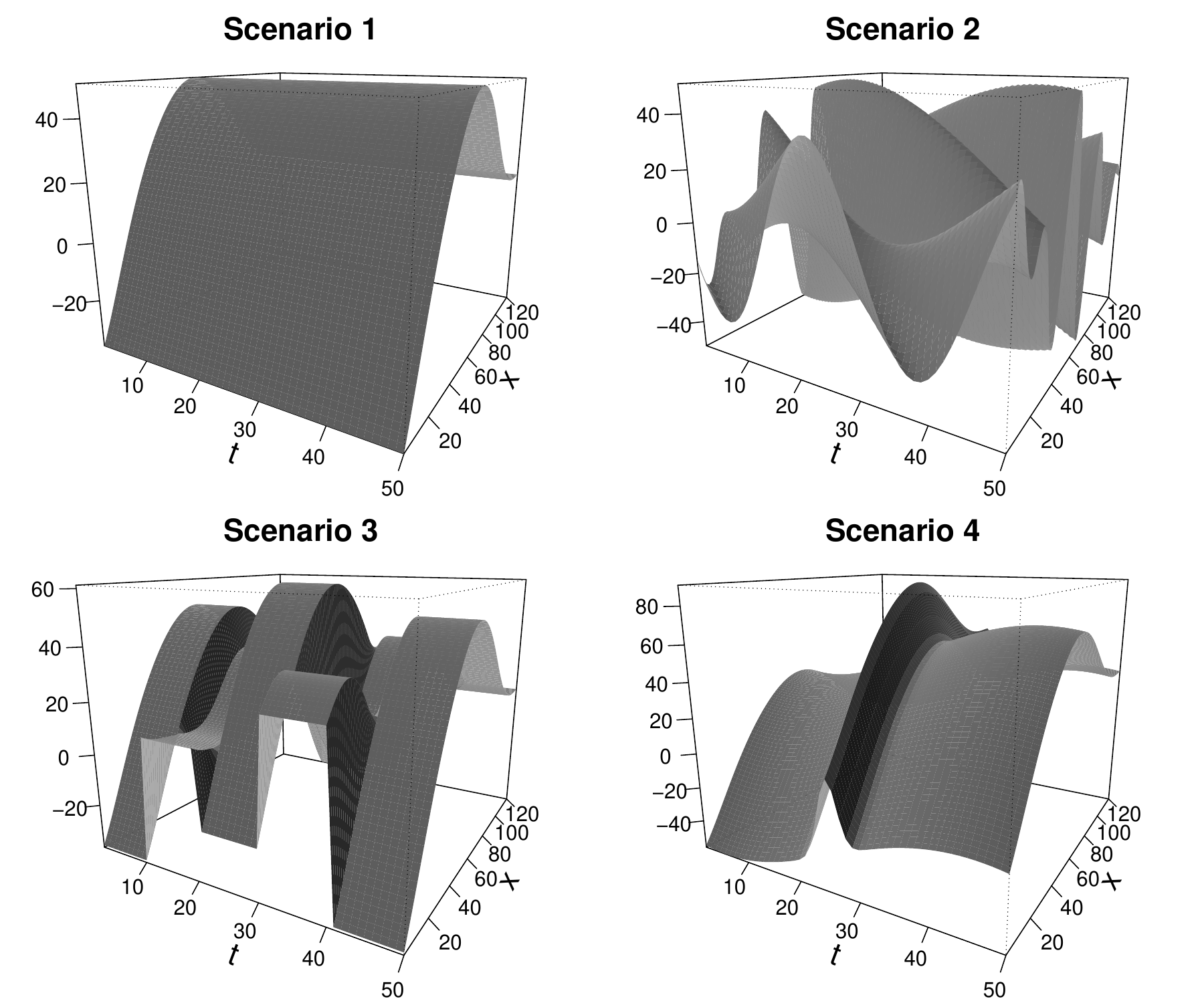}
  \end{center}
  \caption{Each surface represents a three-dimensional plot of the true trend.\label{data_3d}}
\end{figure}

In scenario 1, we investigate whether the proposed methods are able to discover the trend is constant over time even in the presence of noise. 
Scenario 2 checks the ability of the methods to extract a continuous curve from noisy data. 
Scenario 3 reveals the ability of the methods to detect abrupt changes between intermittent straight horizons, that is, discontinuity points.
In scenario 4, we examine whether the methods can capture periods when the smoothness of the trend changes significantly.

\subsection{Homogeneously observed data}
\label{sec:sim1}
We first considered using the full dataset generated according to the method presented in the previous section, which we call {\it homogeneously observed data}.
For the simulated data, we applied the following methods: 
\begin{itemize}
\item[-]
FHS: functional horseshoe smoothing. 

\item[-]
FLS: an artificial alternative method using the Laplace-like prior, that is, $\lambda_t^2\sim {\rm Exp}(1)$. 

\item[-]
B-spline: a curve fitting method using B-spline function, which is also used as the basis of the above two methods.

\item[-]
FTF: Functional Trend Filtering developed by \cite{wakayama2021locally}.

\item[-]
BART: Bayesian additive regression trees developed by \cite{chipman2010bart}.

\end{itemize}

The motivation for using the FLS method is to address the importance of the half-Cauchy prior for the local parameter $\lambda_t$ as discussed in Section \ref{sec:BTF}.
The purpose of using BART is to compare FHS with existing flexible methods. In fact, this time-dependent functional analysis can be reframed as a bivariate regression problem, for which BART can be applied.
Also, to compare FHS with locally adaptive frequentist methods of functional time series analysis, we applied FTF.

For the Bayesian methods, we used 3000 posterior draws after discarding 3000 burn-in samples.
In the two methods, FHS and FLS, the optimal number of the basis functions, $L$, and the order of difference, $k$, is selected via the PPL criterion among candidates $L\in \{5,9,13,17,21,25\}$ and $k\in \{0,1\}$.

For the evaluation of point estimates, we adopted the following criterion:
\begin{itemize}
\item[-]
Mean absolute deviation (MAD): Difference between the posterior medians and the true values, defined as  
$$
\mathrm{MAD}=\frac{1}{HT} \sum_{x=1}^{H}\sum_{t=1}^T  \bigl|\hat{\beta}_t(x)-\beta_t(x)\bigl|.
$$

\end{itemize}
Moreover, we used the following two criteria to evaluate $95\%$ credible intervals obtained from FHS and FLS:
\begin{itemize}
\item[-]
Mean credible interval width (MCIW): The width of intervals, define as 
$$
\mathrm{MCIW}=\frac{1}{HT} \sum_{x=1}^{H}\sum_{t=1}^T  \hat{\beta}^{97.5}_t(x)-\hat{\beta}^{2.5}_t(x),
$$
where $\hat{\beta}^{97.5}_t(x)$ and $\hat{\beta}^{2.5}_t(x)$ correspond to $97.5$ and $2.5$ percentiles of posterior distribution for $\beta_t(x)$.
    
\item[-]
Mean coverage (MC): The coverage accuracy of the credible interval, defined as 
$$
\mathrm{MC}=\frac{1}{HT} \sum_{x=1}^{H}\sum_{t=1}^{T} \mathbb{I}_{\{\hat{\beta}^{97.5}_t(x)>\beta_t(x) >\hat{\beta}^{2.5}_t(x)\}}.
$$
\end{itemize}

We repeated the simulations 150 times, and the averages across simulations are presented in Table \ref{tab:full}.

\begin{table}[p]
\caption{MAD (mean absolute deviation), MCIW (mean credible interval width), MC (mean coverage) and the number $L$ of basis for FHS (the estimator using the horseshoe-like prior), FLS (the estimator based on the Laplace-like prior), Spline1 (B-spline estimator with the same basis as FHS) and Spline2 (B-spline estimator with the same basis as FLS) for each scenario.\label{tab:full}} 
\centering
\medskip
\begin{tabular}{cccccccccccc} 
\toprule
Scenario & Method &  & MAD & MCIW & MC & $L$\\
\midrule
1&FHS & & 0.538 & 3.321 &  0.981 & 17.9\\  
&FLS & & 0.673 & 4.374 &  0.990 & 24.9\\ 
&Spline1 & & 1.495 & - &  - & 17.9\\  
&Spline2 & & 1.127 & - & - & 24.9\\
&BART & & 0.656 & 3.187 &  0.934 & -\\ 
&FTF & & 0.490 & - & - & - \\
\midrule
2&FHS& & 0.961 & 5.214 &  0.966 & 23.4\\  
&FLS& & 0.940 & 4.590 & 0.941 & 24.9  \\  
&Spline1 & & 1.739 & -& - & 23.4\\  
&Spline2 & & 1.333 & - & - & 24.9\\
&BART & & 2.539 & 9.122 & 0.830 & -\\ 
&FTF & & 1.700 & - & - & - \\
\midrule
3&FHS& & 0.713 & 4.350 & 0.982 & 21.5 \\  
&FLS& & 3.701 & 6.307 & 0.592 & 9.24  \\ 
&Spline1 & & 1.560 & - &  - & 21.5\\  
&Spline2 & & 4.045 & - &  - & 9.24\\
&BART & & 1.217 & 5.061 &  0.891 & -\\ 
&FTF & & 2.245 & - & - & - \\
\midrule
4&FHS& & 0.814 & 3.988 &  0.937 & 19.0 \\  
&FLS& & 1.523 & 4.276 &  0.807 & 10.1 \\  
&Spline1 & & 1.542 & - &  -   & 19.0\\  
&Spline2 & & 1.868 & - &  - & 10.1\\
&BART & & 0.691 & 3.210 &  0.926 & -\\ 
&FTF & & 1.213 & - & - & - \\
\bottomrule
\end{tabular}
\end{table}

Overall, we found that FHS had better performance than the other method from Table \ref{tab:full}. The concrete discussion about estimation accuracy is as follows.
\begin{itemize}
    \setlength{\leftskip}{1.2cm}
    \item[Scenario 1:] Frequentist shrinkage methods generally have a stronger ability to shrink estimators to zero than Bayesian methods, and hence FTF yields the best results. However, the difference with FHS is subtle, and FHS's ability to reduce is also significant.
    \item[Scenario 2:] The function data was a complex and ever-changing signal, but FHS and FLS captured it well due to their flexibility. It was found that the proposed methods fitted successfully for smoothly transitioning functional time series data. In the data where abrupt changes did not exist, the curve could be estimated better by FLS.
    \item[Scenario 3:] The distinction between FHS and the other methods became evident. FHS was able to estimate the trend almost horizontally where the amount of change is zero and made large changes only at the discontinuous points. In contrast, FLS had less ability to make sparse estimates, resulting in a gently curved estimate. This suggests choosing an appropriate prior is essential.
    \item[Scenario 4:] BART and FHS results eclipse those of the other methods. In fact, FHS is inferior to BART but still guarantees the flexibility to capture abrupt changes.
\end{itemize}
As a result, the FHS has a basically favorable performance, and while FHS is occasionally inferior in performance, the difference was slight and the usefulness of FHS should not be compromised even in this situation.
With respect to CP, FHS always achieves a value close to $95$, indicating that it is a better inference than other Bayesian methods in terms of uncertainty evaluation.
Moreover, the stability of estimation accuracy and coverage of FHS compared to BART, a mere nonparametric method, indicates that it is more successful in analyzing it as functional time series data.


Next, we investigated how the credible intervals change with the sample size.
We changed the number $H$ of data at each time from 120 to 60 and compared that with the original settings with respect to MCIW and MC.
According to Table \ref{tab:h60}, the more data we got, the narrower the range of CIs was, even though the MC remained almost the same.
This result is consistent with the fact that uncertainty decreases with more data. This suggests that in trend estimation of functional data, not only the number of functions but also the number of observation points for each function is essential.

\begin{table}[h]
\caption{MCIW (mean credible interval width), MASVD (mean absolute sequentially variational deviation), MC (mean coverage) of FHS with horseshoe prior for $H=120$ and for $H=60$.\label{tab:h60}} 
\centering
\medskip
\begin{tabular}{cccccccccccc} 
\toprule
Scenario & H  &   & MCIW & MC \\ 
\midrule
1& 120  & & 3.321 & 0.981  \\ 
& 60    & & 4.879 & 0.985 \\ 
\midrule
2& 120  & & 5.214 & 0.966 \\ 
& 60    & & 6.513 & 0.964 \\ 
\midrule
3& 120  & & 4.350 & 0.982  \\ 
& 60    & & 6.234 & 0.979 \\ 
\midrule
4& 120  & & 3.988 & 0.937 \\ 
& 60    & & 5.389 & 0.940 \\ 
\bottomrule
\end{tabular}
\end{table}

In the literature dealing with similar models \citep[e.g.][]{yang2017efficient}, the number of basis is fixed. Also, in the classical approaches \citep[e.g.][]{martinez2021nonparametric,kowal2019functional}, ad hoc choices are made, such as considering cumulative explained variance. 
Those approaches took a larger number of basis than necessary.
Then we investigated whether the choice of basis is meaningful, i.e., the advantage of basis selection over preparing many basis functions. Hence, we fixed the number of basis functions at $25$ and implemented our method.
The results are shown in Table \ref{tab:basis}. It can be seen that, in general, the accuracy was better with the choice of bases.
Hence, choosing the number of bases played an essential role in increasing the accuracy. 
Since this model is a fully Bayesian framework, we could use a simple selection criterion, PPL. This is one advantage of the Bayesian models.

\begin{table*}[p]
\caption{MAD (mean absolute deviation), MCIW (mean credible interval width) and MC (mean coverage) for FHS (the estimator using the horseshoe-like prior) with different number of basis for each scenario.\label{tab:basis}} 
\centering
\medskip
\begin{tabular}{cclcccccccccc} 
\toprule
Scenario & & Method & $L$ &  & MAD & MCIW  & MC  \\
\midrule
1&& selected & 17.9 & & 0.538 & 3.321 &  0.981 \\  
&& fixed& 25.0 & & 0.571 & 3.658 &  0.986 \\  
\midrule
2&&  selected & 23.4 & & 0.961 & 5.214 &  0.966 \\  
&& fixed& 25.0&  & 0.981 & 5.362 &  0.967 \\  
\midrule
3&& selected& 21.5 & & 0.713 & 4.350 &  0.982  \\  
&& fixed& 25.0 &  & 0.749 & 4.572 & 0.982 \\  
\midrule
4&& selected & 19.0 & & 0.814 & 3.988 &  0.937 \\  
&&  fixed & 25.0 &  & 0.896 & 4.361 &  0.935  \\  
\bottomrule
\end{tabular}
\end{table*}

\begin{table*}[]
\caption{MAD (mean absolute deviation), MCIW (mean credible interval width) and MC (mean coverage) for FHS (the estimator based on the horseshoe-like prior) for each scenario under heterogeneously observed points.
\label{tab:miss}} 
\centering
\medskip
\begin{tabular}{cccccccccccc} 
\toprule
Scenario & omitted rate &  & MAD & MCIW &  MC\\ 
\midrule
1& 0$\%$& & 0.538 & 3.321 &  0.981 \\  
& 5$\%$ & & 0.674 & 4.335 &  0.985 \\  
& 10$\%$ && 0.727 &	4.622 &  0.985 \\  
\midrule
2& 0$\%$& & 0.961 & 5.214 &  0.966 \\ 
& 5$\%$&  & 1.017 & 5.599 &	 0.968 \\  
& 10$\%$& & 1.066 & 5.822 &  0.968 \\
\midrule
3& 0$\%$& & 0.713 & 4.350 &  0.982 \\  
& 5$\%$&  & 0.821 & 5.056 &  0.982 \\ 
& 10$\%$& & 0.862 & 5.277 &  0.980 \\   
\midrule
4& 0$\%$& & 0.814 & 3.988 &  0.937 \\  
& 5$\%$&  & 0.889 & 4.784 &  0.964 \\  
& 10$\%$& & 0.889 & 4.806 &  0.964 \\  
\bottomrule
\end{tabular}
\end{table*}

\subsection{Heterogeneously observed data}
We examined the cases where $5\%$ and $10\%$ of the whole data were omitted at random. 
We report the results in Table \ref{tab:miss} based on 3,000 MCMC iterations obtained after a burn-in period of 3,000 iterations. As the percentage of omitted data increased, the data became more unequally spaced and the estimation became less precise, but it was still able to capture the trend accurately.
The wide MCIW implied the increase in uncertainty due to the omitted data.
In addition, it should have been challenging to estimate where there is no data, but Figure \ref{fig:miss} suggests that the estimation could follow the trend including these points.
This provides evidence that heterogeneity in both the number of sampling points and the sampled locations does not make implementation challenging or have a significant negative impact on accuracy.

\begin{figure}[t]
  \begin{center}
  \includegraphics[width=\textwidth]{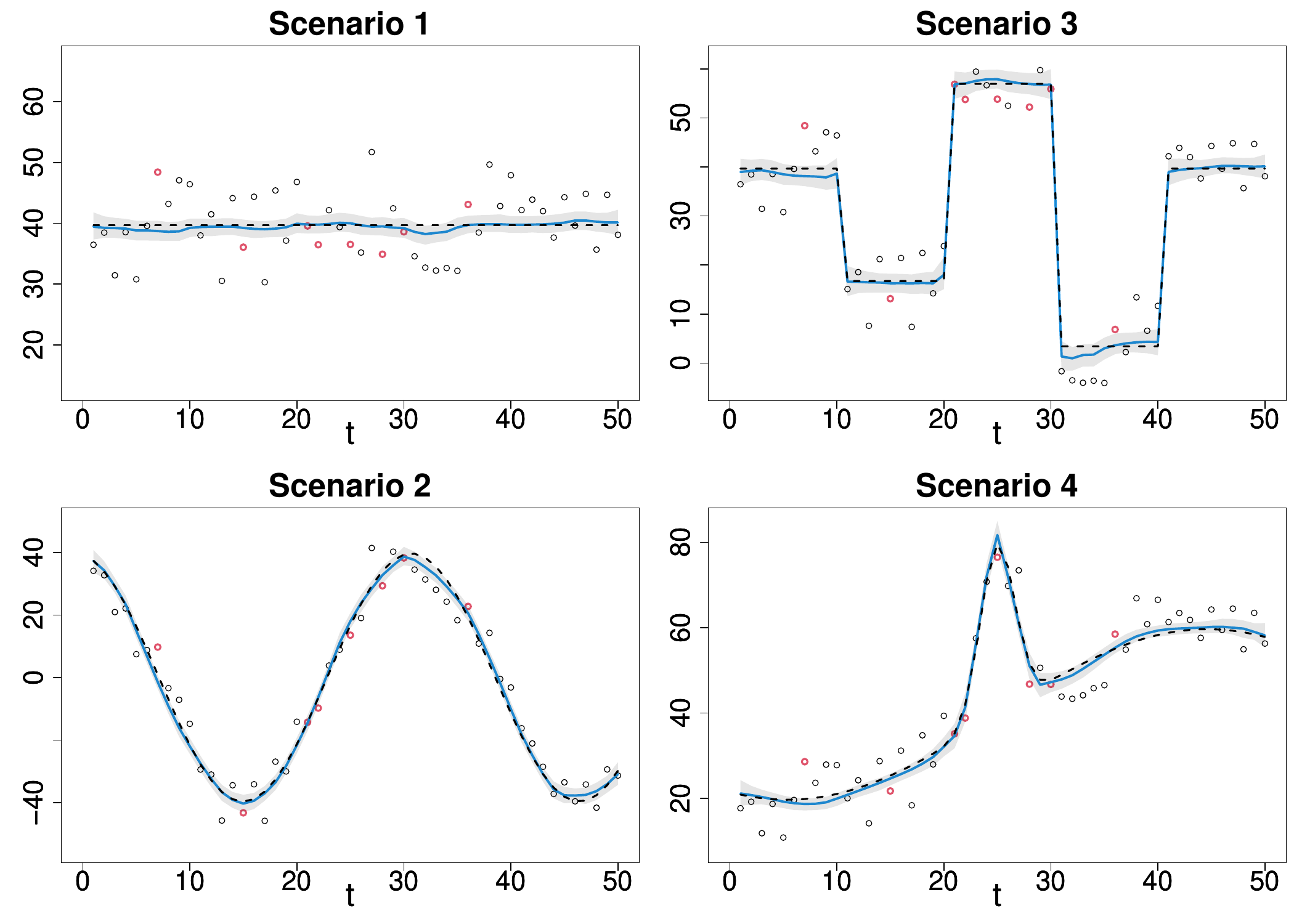}
  \end{center}
  \caption{The dotted line is the true trend, the blue line is the estimated trend, the gray area is the $95\%$ credible intervals at $x=40$, and red dots indicate omitted measurements. The data acquisition interval is uneven, and missing values exist.\label{fig:miss}}
\end{figure}

\section{Empirical Application}\label{sec:app}

Many studies have demonstrated the applicability and performance of functional time series analysis methods using age-specific fertility data \cite[e.g.][]{hyndman2007robust, wakayama2021locally}.
This section presents an empirical application of the proposed method using annual age-specific Australian fertility rates, which were obtained from the Australian Bureau of Statistics. These are defined as the number of births per 1,000 female residents.
This data covers the age group $15-49$ and the period $1921-2015.$
We then consider that there are $95$ functions with the domain $[15,49].$
Our interest in this application is the transition of the functions over time.

We applied functional horseshoe smoothing (FHS) and its Laplace prior version (FLS). 
The numbers of the basis for FHS and FLS were $27$ and $20$, chosen via PPL from $\{5,6,\ldots,30\}$. Also, the difference order $k$ was selected as 1 by PPL. The observation is shown in the upper left in Figure \ref{fig:asf:3d}, and its surface is rugged.
FLS smoothed the surface and largely removed the noise.
This is also the case for FHS, where the surface is smooth and the denoising effect is confirmed.
The difference is that FHS left the sharp edges intact, while FLS erased the sharp edges.
This implies that the FLS reduced the signal as well as the noise and hid change points.
Not doing it is the strength of FHS.

\begin{figure*}[p]
  \begin{center}
  \vspace*{-1cm}
  \includegraphics[width=13cm]{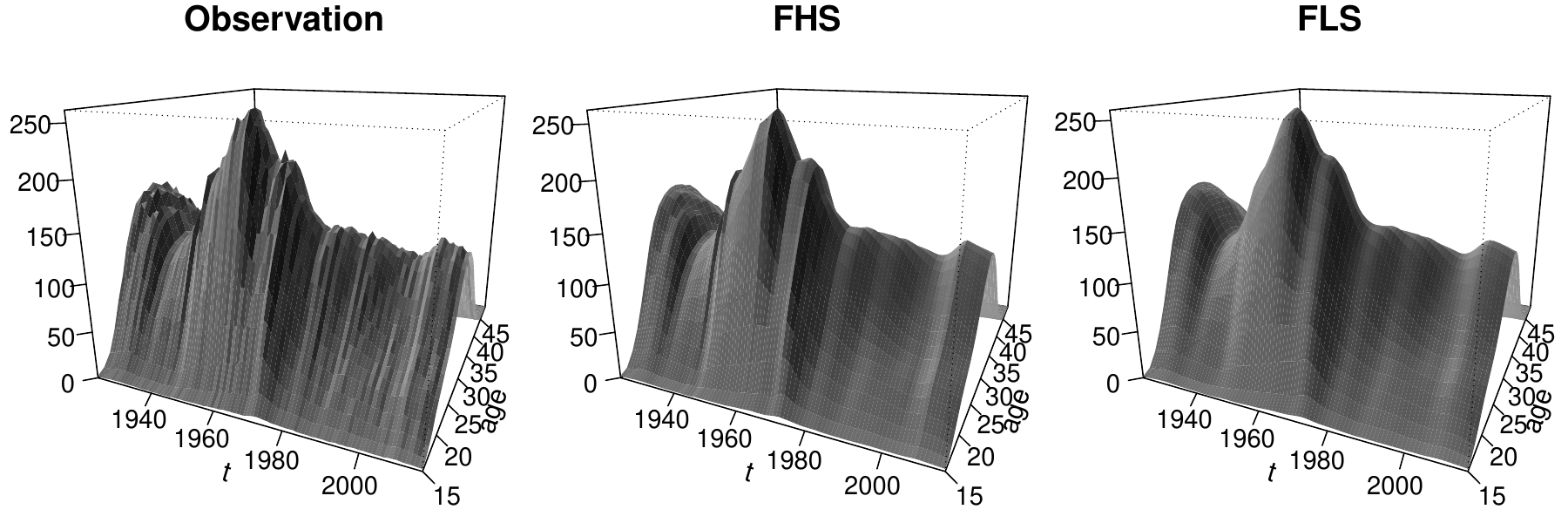}
  \end{center}
  \caption{The number of births per 1000 female residents by age in each year in Australia. The upper left is the observed quantity and the upper right (lower left) is the smoothed surface by FHS (FLS).\label{fig:asf:3d}}
\end{figure*}
\begin{figure*}[]
  \begin{center}
  \includegraphics[width=13cm]{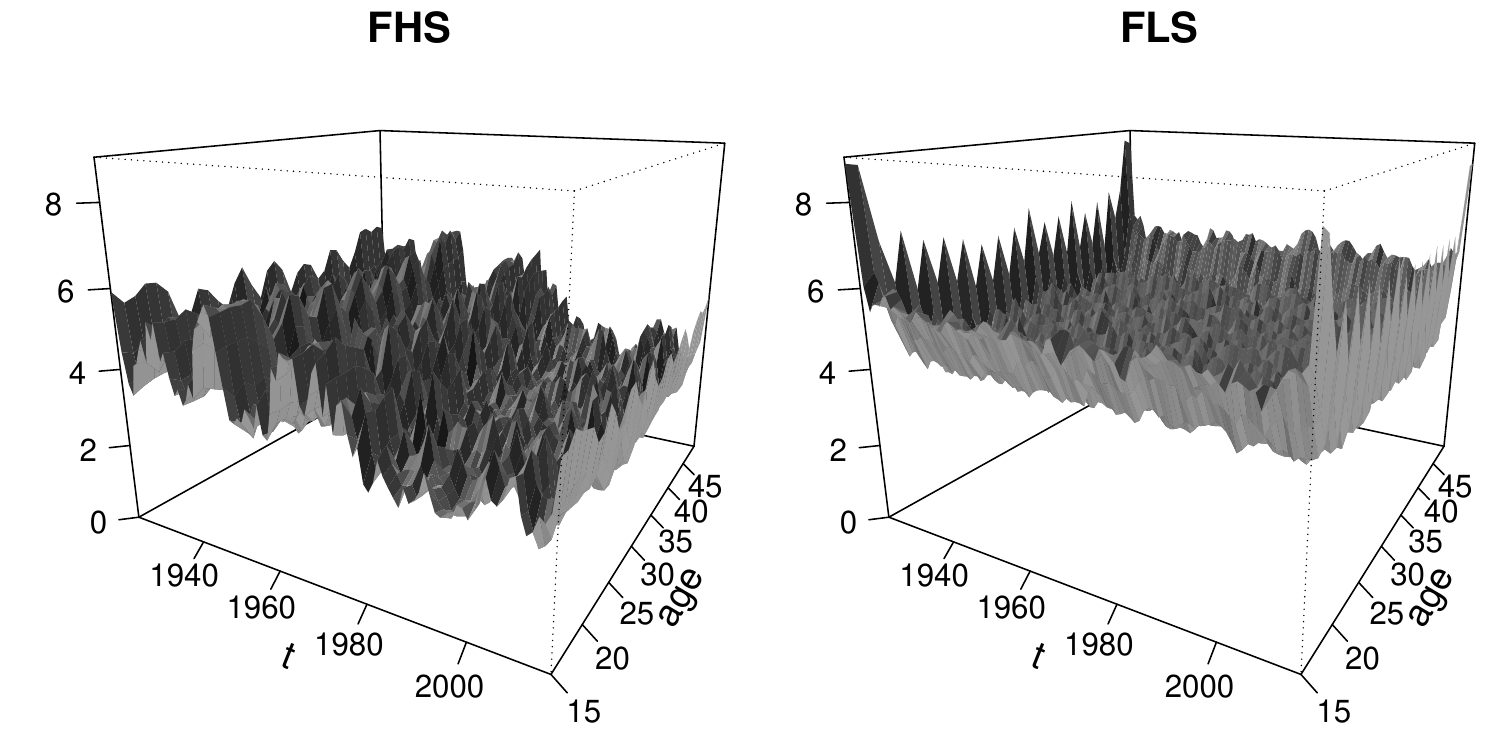}
  \end{center}
  \caption{The left (right) figure shows the credible interval width of FHS (FLS) in three dimensions.\label{fig:asf:ci}}
\end{figure*}

We next focus on the credible interval.
Figure \ref{fig:asf:ci} shows the difference between the 97.5 percentile and the 2.5 percentile for each year.
This is a three-dimensional representation of the size of the 95$\%$ credible region.
We see that FHS had a smaller credible area than FLS and regard FHS as a more plausible model.

\section{Discussion}
\label{sec:dis}
We presented the Bayesian nonparametric smoothing method for functional time series data. This enables locally adaptive estimation by exploiting the sparsity from the shrinkage prior distributions.
Simulation studies and empirical applications suggest that the method's performance, especially with a horseshoe-like prior, is good even with the presence of sharp changes.

Moreover, we elucidated its theoretical properties. We discussed two significant issues with the shrinkage prior distribution. The first is the spike at the origin of the marginal prior, and the second is the thickness of its tail.
Because of this, the estimator is expected to have ideal properties of eliminating noise and detecting abrupt changes simultaneously. 
We further checked the latter by proving the weak tail robustness of the posterior expectation. These are theoretical reason why horseshoe-like prior is favorable.

Functional horseshoe smoothing (FHS) has two advantages over functional trend filtering \citep{wakayama2021locally}. One is that selecting the parameters is easy. In the optimization approach, selecting tuning parameter (penalty parameter) using K-fold cross-validation was necessary, and the computational complexity increased with the number of folds and the number of parameters' candidates. 
The time and effort required to select three parameters, $k$ (the order of difference), $L$ (the number of basis functions), and $\lambda$ (the penalty parameter), is enormous.
In Bayesian approach, penalty parameters such as local parameters and global parameter are automatically selected through MCMC. The parameter selection is easier than with the frequentist method because it does not require the selection of penalty parameters and the PPL (criterion of model) can be calculated.
In addition, Bayesian models tended to perform better than frequentist methods, especially when flexible shrinkage is required. This is because, as \cite{polson2010shrink, carvalho2010horseshoe} noted, by using local and global parameters, the horseshoe prior can shrink each part of the estimation to a different degree. Hence, FHS allows the smooth and sharp parts of the estimation to coexist.

Also, FHS can deal with heterogeneously observed data. This model treats high-dimensional data as functional data, and as a result, allows heterogeneity in both the number of sampling points and the sampling locations.
Hence, this model does not require missing value completion and can estimate trends without being disturbed by heterogeneity.
In addition to being easy to implement, the results of simulation studies showed its good accuracy.

Our model is also useful for estimating VCFLM (varying-coefficient functional linear model).
VCFLM is a combination of a scalar-on-function model and a varying-coefficient model, where the coefficients are functions \cite{matsui2022truncated, wu2010varying, cardot2008varying} depending on exogenous variables. By expanding the predictor function and coefficient function in an orthonormal basis and introducing our prior into the coefficients of the basis expansion, the VCFLM can be analyzed in FHS framework.

\section*{Acknowledgments}
Research of the authors was supported in part by JSPS KAKENHI Grant Numbers 18H03628, 20H00080 and 21H00699 from Japan Society for the Promotion of Science.

\vspace{1cm}
\appendix
\begin{center}
{\large {\bf  Appendices}}
\end{center}

\section*{Appendix 1: Detailed forms of full conditional distributions of $\bm{b}_t$}
\label{append:post}
The full conditional distribution of $\bm{b}_t$ is $N(\bm{\mu}_t, \Sigma_t)$.
When $k=0$, it holds that
\begin{align*}
\bm{\mu}_t  = \left\{
    \begin{array}{ll}
         \Sigma_t\left(\lambda_t^{-2}\Pt^{\top}\Pt \bm{b}_{t+1}+\lambda_{t-1}^{-2}\Phi_{t-1}^{\top}\Phi_{t-1}\bm{b}_{t-1} +\tau^2  \Pt^{\top} \bm{y}_t \right)/(\tau^2\sigma^2)  & 2\leq t\leq T-1  \\
        \Sigma_t\left(\lambda_t^{-2}\Pt^{\top}\Pt \bm{b}_{t+1}+\tau^2  \Pt^{\top} \bm{y}_t \right)/(\tau^2\sigma^2) &   t=1 \\
        \Sigma_t\left(\lambda_{t-1}^{-2}\Phi_{t-1}^{\top}\Phi_{t-1}\bm{b}_{t-1} +\tau^2  \Pt^{\top} \bm{y}_t \right)/(\tau^2\sigma^2)    &  t=T \\
    \end{array} \right.
\end{align*}
and
\begin{align*}
    \Sigma_t^{-1}    =\left\{
    \begin{array}{ll}
        \left\{(\tau^2+\lambda_t^{-2})\Pt^{\top}\Pt + \lambda_{t-1}^{-2}\Phi_{t-1}^{\top}\Phi_{t-1}\right\} / (\tau^2\sigma^2) & 2\leq t\leq T-1  \\
        (\tau^2+\lambda_t^{-2})\Pt^{\top}\Pt  / (\tau^2\sigma^2) &   t=1 \\
        \tau^2\Pt^{\top}\Pt + \lambda_{t-1}^{-2}\Phi_{t-1}^{\top}\Phi_{t-1} / (\tau^2\sigma^2)   &  t=T.\\
    \end{array} \right.
\end{align*}
Also, when $k=1$, mean vector and precision matrix are \begin{align*}
   \bm{\mu}_t  = \left\{
    \begin{array}{ll}
        \Sigma_t\left(\frac{\Pt^{\top}\Pt(2\bm{b}_{t+1}-\bm{b}_{t+2})}{\sigma^2\tau^2\lambda^2_{t}}
        +\frac{\Pt^{\top} \bm{y}_t}{\sigma^2} \right) &  t=1 \\
        \Sigma_t\left(\frac{\Pt^{\top}\Pt(2\bm{b}_{t+1}-\bm{b}_{t+2})}{\sigma^2\tau^2\lambda^2_{t}}+\frac{2\Phi^{\top}_{t-1}\Phi_{t-1}(\bm{b}_{t+1}+\bm{b}_{t-1})}{\sigma^2\tau^2\lambda^2_{t-1}}
        +\frac{\Pt^{\top} \bm{y}_t}{\sigma^2} \right) &  t=2 \\
        \Sigma_t\left(\frac{\Pt^{\top}\Pt(2\bm{b}_{t+1}-\bm{b}_{t+2})}{\sigma^2\tau^2\lambda^2_{t}}+\frac{2\Phi^{\top}_{t-1}\Phi_{t-1}(\bm{b}_{t+1}+\bm{b}_{t-1})}{\sigma^2\tau^2\lambda^2_{t-1}}+\frac{\Phi_{t-2}^{\top}\Phi_{t-2}(2\bm{b}_{t-1}-\bm{b}_{t-2})}{\sigma^2\tau^2\lambda^2_{t-2}}
        +\frac{\Pt^{\top} \bm{y}_t}{\sigma^2} \right) & 3\leq t\leq T-2  \\
        \Sigma_t\left(\frac{2\Phi^{\top}_{t-1}\Phi_{t-1}(\bm{b}_{t+1}+\bm{b}_{t-1})}{\sigma^2\tau^2\lambda^2_{t-1}}+\frac{\Phi_{t-2}^{\top}\Phi_{t-2}(2\bm{b}_{t-1}-\bm{b}_{t-2})}{\sigma^2\tau^2\lambda^2_{t-2}}
        +\frac{\Pt^{\top} \bm{y}_t}{\sigma^2} \right) &  t=T-1 \\
        \Sigma_t\left(\frac{\Phi_{t-2}^{\top}\Phi_{t-2}(2\bm{b}_{t-1}-\bm{b}_{t-2})}{\sigma^2\tau^2\lambda^2_{t-2}}
        +\frac{\Pt^{\top} \bm{y}_t}{\sigma^2} \right) &  t=T \\
    \end{array} \right.
\end{align*}
and
\begin{align*}
    \Sigma_t^{-1}    =\left\{
    \begin{array}{ll}
        \frac{\tau^2\lambda_t^2+1}{\sigma^2\tau^2\lambda_t^2}\Phi_t^{\top}\Phi_t &  t= 1  \\
        \frac{\tau^2\lambda_t^2+1}{\sigma^2\tau^2\lambda_t^2}\Phi_t^{\top}\Phi_t+\frac{4}{\sigma^2\tau^2\lambda_{t-1}^2}\Phi^{\top}_{t-1}\Phi_{t-1} &  t= 2  \\
        \frac{\tau^2\lambda_t^2+1}{\sigma^2\tau^2\lambda_t^2}\Phi_t^{\top}\Phi_t+\frac{4}{\sigma^2\tau^2\lambda_{t-1}^2}\Phi^{\top}_{t-1}\Phi_{t-1}+\frac{1}{\sigma^2\tau^2\lambda_{t-2}^2}\Phi^{\top}_{t-2}\Phi_{t-2} & 3\leq t\leq T-2  \\
        \frac{\tau^2\lambda_t^2}{\sigma^2\tau^2\lambda_t^2}\Phi_t^{\top}\Phi_t+\frac{4}{\sigma^2\tau^2\lambda_{t-1}^2}\Phi^{\top}_{t-1}\Phi_{t-1}+\frac{1}{\sigma^2\tau^2\lambda_{t-2}^2}\Phi^{\top}_{t-2}\Phi_{t-2} &  t= T-1  \\
        \frac{\tau^2\lambda_t^2}{\sigma^2\tau^2\lambda_t^2}\Phi_t^{\top}\Phi_t+\frac{1}{\sigma^2\tau^2\lambda_{t-2}^2}\Phi^{\top}_{t-2}\Phi_{t-2} &  t= T
    \end{array} \right.
\end{align*}

Assume data is observed homogeneously, that is, $\Phi_1=\Phi_2=\cdots=\Phi_T$.
When $k=0$, mean vector and precision matrix can be written as 
\begin{align*}
\bm{\mu}_t  = \left\{
    \begin{array}{ll}
        \left\{ c_t(\lambda_t^{-2} \bm{b}_{t+1}+\lambda_{t-1}^{-2}\bm{b}_{t-1}) +\tau^2 \Sigma_t \Phi^{\top} \bm{y}_t \right\}/(\tau^2\sigma^2)  & 2\leq t\leq T-1  \\
        \left( c_t \lambda_t^{-2} \bm{b}_{t+1} +\tau^2 \Sigma_t \Phi^{\top} \bm{y}_t \right)/(\tau^2\sigma^2) &   t=1 \\
        \left(c_t\lambda_{t-1}^{-2}\bm{b}_{t-1}) +\tau^2 \Sigma_t \Phi^{\top} \bm{y}_t \right)/(\tau^2\sigma^2)   &  t=T \\
    \end{array} \right.
\end{align*}
and $\Sigma_t = c_t(\Phi^\top\Phi)^{-1}$ with
\begin{align*}
    c_t   =\left\{
    \begin{array}{ll}
        \tau^2\sigma^2/(\tau^2+\lambda_t^{-2}+\lambda_{t-1}^{-2}) & 2\leq t\leq T-1  \\
        \tau^2\sigma^2/(\tau^2+\lambda_t^{-2}) &   t=1 \\
        \tau^2\sigma^2/(\tau^2+\lambda_{t-1}^{-2})   &  t=T.\\
    \end{array} \right.
\end{align*}
Furthermore, when $k=1$, the forms of $\bm{\mu}_t$ and $c_t$ are 
\begin{equation*}
\bm{\mu}_t = 
    \begin{cases}
        \begin{split}
        & [c_t\{ (2\lambda_t^{-2}+2\lambda_{t-1}^{-2}) \bm{b}_{t+1}+ (2\lambda_{t-1}^{-2} +2\lambda_{t-2}^{-2})\bm{b}_{t-1}
        -\lambda_{t}^{-2}\bm{b}_{t+2}-\lambda_{t-2}^{-2}\bm{b}_{t-2} \}   \\
        & \qquad \qquad \qquad +\tau^2 \Sigma_t \Phi^{\top} \bm{y}_t ]/(\tau^2\sigma^2)
        \end{split} &  3\leq t\leq T-2 \\
        \left[ c_t(2\lambda_t^{-2}\bm{b}_{t+1}-\lambda_{t}^{-2}\bm{b}_{t+2} ) +\tau^2 \Sigma_t \Phi^{\top} \bm{y}_t \right]/(\tau^2\sigma^2) &   t=1\\
        \left[ c_t\{ (2\lambda_t^{-2}+2\lambda_{t-1}^{-2}) \bm{b}_{t+1}+ 2\lambda_{t-1}^{-2}\bm{b}_{t-1}
        -\lambda_{t}^{-2}\bm{b}_{t+2} \} +\tau^2 \Sigma_t \Phi^{\top} \bm{y}_t \right]/(\tau^2\sigma^2)   &   t=2 \\
        \left[ c_t\{ 2\lambda_{t-1}^{-2} \bm{b}_{t+1}+ (2\lambda_{t-1}^{-2} +2\lambda_{t-2}^{-2})\bm{b}_{t-1}
        -\lambda_{t-2}^{-2}\bm{b}_{t-2} \} +\tau^2 \Sigma_t \Phi^{\top} \bm{y}_t \right]/(\tau^2\sigma^2)    &  t=T-1 \\
        \left[ c_t( 2\lambda_{t-2}^{-2}\bm{b}_{t-1}
        -\lambda_{t-2}^{-2}\bm{b}_{t-2} ) +\tau^2 \Sigma_t \Phi^{\top} \bm{y}_t \right]/(\tau^2\sigma^2)    &  t=T \\
    \end{cases}
\end{equation*}
and 
\begin{align*}
    c_t   =\left\{
    \begin{array}{ll}
        \tau^2\sigma^2/(\tau^2+\lambda_t^{-2}+4\lambda_{t-1}^{-2}+\lambda_{t-2}^{-2}) & 3\leq t\leq T-2  \\
        \tau^2\sigma^2/(\tau^2+\lambda_t^{-2}+4\lambda_{t-1}^{-2}) &   t=2 \\
        \tau^2\sigma^2/(\tau^2+\lambda_t^{-2}) &   t=1 \\
        \tau^2\sigma^2/(\tau^2+4\lambda_{t-1}^{-2}+\lambda_{t-2}^{-2})&  t=T-1\\
        \tau^2\sigma^2/(\tau^2+\lambda_{t-2}^{-2})&  t=T.\\
    \end{array} \right.
\end{align*}

\section*{Appendix 2: Proof of the propositions}\label{append:proof}
This section provides proofs of Proposition \ref{prop:pri1} and Proposition \ref{prop:tail}.

\begin{proof}[Proof of Proposition \ref{prop:pri1} $(i)$]
For $L=1$, the divergence of $\pi(\bm{\delta}_t \mid\tau,\sigma)$ follows from the proof of Theorem 1 in \cite{carvalho2010horseshoe}.
We restrict our investigation to $L\ge2$.
For simplicity, we fix $\sigma=\tau=1$. We have
\begin{align*}
    \pi(\bm{\delta}_t ) &\propto
\int_{0}^{\infty}\frac{1}{\lambda^L_t(1+\lambda_t^2)}\exp\left\{ -\frac{1}{2\lambda_t^2}\bm{\delta}^{\top}_t \Pt^{\top} \Pt \bm{\delta}_t \right\}d\lambda_t\\
&\ge \int_{0}^{1}\frac{1}{\lambda^L_t(1+\lambda_t^2)}\exp\left\{ -\frac{1}{2\lambda_t^2}\bm{\delta}^{\top}_t \Pt^{\top} \Pt \bm{\delta}_t \right\}d\lambda_t.
\end{align*}
Let $u=\lambda_t^{-2}$. Then, we obtain
\begin{align*}
    \pi(\bm{\delta}_t ) 
&\ge 2\int_{1}^{\infty}\frac{u^{\frac{L-1}{2}}}{1+u}\exp\left\{-\frac{u}{2}\bm{\delta}^{\top}_t \Pt^{\top} \Pt \bm{\delta}_t\right\}du\\
&\ge \int_{1}^{\infty}u^{\frac{L-3}{2}}\exp\left\{-\frac{u}{2}\bm{\delta}^{\top}_t \Pt^{\top} \Pt \bm{\delta}_t\right\}du.
\end{align*}
Also, we transform $\tilde{u}=\frac{\bm{\delta}^{\top}_t \Pt^{\top} \Pt \bm{\delta}_t}{2} u$. It leads to 
\begin{align*}
    \pi(\bm{\delta}_t)
\ge\left(\frac{\sqrt{2}}{\|\Phi\bm{\delta_t}\|_2}\right)^{L-1}
\int_{\frac{\|\Phi\bm{\delta_t}\|_2^2}{2}}^{\infty}\tilde{u}^{\frac{L-3}{2}}\exp\left\{-\tilde{u}\right\}d\tilde{u}.
\end{align*}The right-hand equals to\begin{align*}
    \left(\frac{\sqrt{2}}{\|\Phi\bm{\delta_t}\|_2}\right)^{L-1}\Gamma\left(\frac{L-1}{2}, \frac{\|\Phi\bm{\delta_t}\|_2^2}{2}\right),
\end{align*}
where $\Gamma(\cdot,\cdot)$ is the upper incomplete gamma function. If $\bm{\delta}_t\rightarrow 0$, it holds that
\begin{align*}
    \frac{\sqrt{2}}{\|\Phi\bm{\delta_t}\|_2} \rightarrow \infty\ ,\ \ \ \ \ 
    \Gamma\left(\frac{L-1}{2},\frac{\|\Phi\bm{\delta_t}\|_2^2}{2}\right)\rightarrow \Gamma\left(\frac{L-1}{2}\right)
\end{align*} Hence, $\pi(\bm{\delta}_t) \rightarrow\infty$ as $\bm{\delta}_t\rightarrow 0$.
\end{proof}

Before proving Proposition \ref{prop:pri1} $(ii)$, we introduce an important result.
First, slight modification with Lemma S5 in \cite{hamura2020log} leads to the following lemma, which gives rise to Corollary \ref{lemma:HIS}.
\begin{lemma}\label{lemma:HIS}
Let $f(\cdot)$ and $g(\cdot)$ be positive and continuous functions such that $e^{-\frac{1}{x}}f(x)$ and $e^{-\frac{1}{x}}g(x)$ are integrable on the real line. Then, it holds that\begin{align}\label{lim}    \frac{ \lim_{x\rightarrow\infty} \int_0^{\infty}\phi(x; 0, v) f(v) dv}{\lim_{x\rightarrow\infty} \int_0^{\infty}\phi(x; 0, v) g(v) dv} = \lim_{v\rightarrow\infty} \frac{f(v)}{g(v)},
\end{align} where $\phi(x; 0, v)$ is p.d.f. of $N(0,v)$.\end{lemma}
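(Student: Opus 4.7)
The plan is to localize both integrals to the regime $v > V$ where $f/g$ is close to its limit $L \equiv \lim_{v \to \infty} f(v)/g(v)$, and to show that the remaining ``head'' contributions on $[0,V]$ are exponentially negligible compared to the tail of the $g$-integral. First, given $\varepsilon > 0$, I would pick $V > 0$ large enough so that $(L-\varepsilon)g(v) \le f(v) \le (L+\varepsilon)g(v)$ for all $v \ge V$, and write $I_h(x) = H_h(x) + T_h(x)$ with $H_h(x) = \int_0^V \phi(x;0,v)h(v)\,dv$ and $T_h(x) = \int_V^{\infty} \phi(x;0,v)h(v)\,dv$ for $h \in \{f,g\}$. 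The tail sandwich $(L-\varepsilon)T_g(x) \le T_f(x) \le (L+\varepsilon)T_g(x)$ is then immediate, so the whole argument reduces to showing $H_f(x)/T_g(x) \to 0$ and $H_g(x)/T_g(x) \to 0$ as $x \to \infty$.

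For the head, I would use the factorization $e^{-x^2/(2v)} = e^{-1/(2v)}\,e^{-(x^2-1)/(2v)}$ together with the monotonicity bound $e^{-(x^2-1)/(2v)} \le e^{-(x^2-1)/(2V)}$ valid for $v \le V$ and $x \ge 1$, which peels off the Gaussian decay and gives $H_h(x) \le C_V^{(h)}\, e^{-(x^2-1)/(2V)}$ with $C_V^{(h)} = (2\pi)^{-1/2}\int_0^V h(v) e^{-1/(2v)}/\sqrt{v}\,dv < \infty$ (finite because $e^{-1/(2v)}/\sqrt{v}$ is bounded on $(0,V]$ and $h$ is continuous, hence bounded, on $[0,V]$). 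For the matching lower bound on $T_g$, positivity and continuity of $g$ let me choose $A = 2V$ and a constant $c > 0$ with $g \ge c$ on $[2V, 4V]$; restricting the defining integral to $[2V,4V]$ and bounding $e^{-x^2/(2v)} \ge e^{-x^2/(4V)}$ there yields $T_g(x) \ge c'\,e^{-x^2/(4V)}$. Hence $H_f(x)/T_g(x) \le C\, e^{-x^2/(4V)} \to 0$, and likewise for $H_g(x)/T_g(x)$.

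Combining these estimates gives $L - \varepsilon \le \liminf_{x\to\infty} I_f(x)/I_g(x) \le \limsup_{x\to\infty} I_f(x)/I_g(x) \le L + \varepsilon$; sending $\varepsilon \downarrow 0$ completes the proof when $L \in (0,\infty)$, and the degenerate cases $L = 0$ and $L = \infty$ follow from one-sided versions of the same inequalities (with $f, g$ swapped for $L = \infty$). The main subtlety I anticipate is ensuring that the localization scale $V$ (which depends on $\varepsilon$) and the auxiliary scale $A$ on which $g$ is bounded below jointly produce a strict exponential gap between head and tail; taking $A = 2V$ sidesteps any nonuniformity by fixing the decay rate $1/(2V) - 1/(4V) = 1/(4V)$ of the head-to-tail ratio as strictly positive for every finite $\varepsilon$. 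A secondary but routine check is the finiteness of $C_V^{(h)}$, which under the stated continuity hypothesis is automatic but, if one wanted to weaken continuity at $0$, could be recovered by switching to the factorization $e^{-x^2/(2v)} \le e^{-1/v}\,e^{-(x^2-2)/(2V)}$ (valid for $v \le V$, $x \ge \sqrt{2}$) and invoking the stated integrability of $e^{-1/v}f(v)$ directly.
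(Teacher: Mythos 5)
Your argument is correct in substance, but it takes a genuinely different route from the paper: the paper never proves this lemma internally, obtaining it instead as a ``slight modification'' of Lemma S5 of \cite{hamura2020log}, so what you give is a self-contained replacement for that citation. Note that the displayed statement is literally a ratio of two limits that are each zero; like the paper's own use of the lemma in Corollary \ref{cor:his}, you read it as the Abelian-type claim $\lim_{x\to\infty} \int_0^\infty \phi(x;0,v)f(v)\,dv \big/ \int_0^\infty \phi(x;0,v)g(v)\,dv = \lim_{v\to\infty} f(v)/g(v)$, which is what is intended. Your decomposition at a threshold $V$, the sandwich of the two tail integrals, the exponential gap between the head bound $e^{-(x^2-1)/(2V)}$ and the tail lower bound $c'e^{-x^2/(4V)}$, and the one-sided treatment of the degenerate limits $0$ and $\infty$ are all sound; this buys transparency and removes the dependence on an external supplement, at the cost of writing out estimates the paper outsources.

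One caveat concerns the finiteness of your head constant $C_V^{(h)}$. Your primary justification (``$h$ is continuous, hence bounded, on $[0,V]$'') does not cover the functions to which the paper actually applies the lemma: in Corollary \ref{cor:his}, $f(v)=v^{-L/2}(1+v)^{-1}$ and $g(v)=v^{-L/2-1}$ blow up at $v=0$ and are continuous only on $(0,\infty)$. Your fallback factorization $e^{-x^2/(2v)}\le e^{-1/v}e^{-(x^2-2)/(2V)}$ is the right idea, but as written it leaves the factor $v^{-1/2}$ from $\phi(x;0,v)$ unabsorbed, and $\int_0^V h(v)e^{-1/v}v^{-1/2}\,dv<\infty$ does not follow from the stated integrability of $e^{-1/v}h(v)$ alone (e.g.\ $h(v)=e^{3/(4v)}$ near zero satisfies the hypothesis but not this). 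The repair is routine: retain part of the exponent, say $e^{-x^2/(2v)}\le e^{-1/v}e^{-(x^2-2)/(4V)}e^{-(x^2-2)/(4v)}$ for $v\le V$, $x^2\ge 2$, and use $v^{-1/2}e^{-(x^2-2)/(4v)}\le C(x^2-2)^{-1/2}$, so the head is bounded by a constant multiple of $(x^2-2)^{-1/2}e^{-(x^2-2)/(4V)}\int_0^V h(v)e^{-1/v}\,dv$, which is still $o\bigl(T_g(x)\bigr)$ against your lower bound $T_g(x)\ge c'e^{-x^2/(4V)}$ (or push the window defining $T_g$ out to $[4V,8V]$ for a strictly exponential gap). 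With that patch the proof is complete under the lemma's stated hypotheses.
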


\begin{corollary}\label{cor:his}
If $\lim_{x\to\infty} f(x)/g(x)=1$, we denote $f\approx g$.
Let $f(v)=\frac{1}{v^{L/2}(1+v)}$ and $g(v)=\frac{1}{v^{L/2+1}}$ for fixed $L\in \mathbb{N}$. Then, $\int_0^{\infty}\phi(x; 0, v) f(v) dv\approx \int_0^{\infty}\phi(x; 0, v) g(v) dv$ holds. 
\end{corollary}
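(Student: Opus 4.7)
The idea is to invoke Lemma~\ref{lemma:HIS} directly, treating Corollary~\ref{cor:his} as a worked example of that lemma applied to the specific pair $(f,g)$. The work therefore reduces to verifying the three hypotheses of Lemma~\ref{lemma:HIS}---positivity and continuity of $f$ and $g$, integrability of $e^{-1/v}f(v)$ and $e^{-1/v}g(v)$, and existence of $\lim_{v\to\infty} f(v)/g(v)$---and then computing that limit.

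First I would observe that $f$ and $g$ are manifestly positive and smooth on $(0,\infty)$. Next I would check the integrability of $e^{-1/v}f(v)$ and $e^{-1/v}g(v)$ by splitting the domain at $v=1$. Near the origin, the factor $e^{-1/v}$ decays to zero faster than any polynomial and smothers the algebraic singularities of $f$ and $g$, so integrability near $0$ is automatic. Near infinity, $e^{-1/v}\to 1$ while both $f(v)$ and $g(v)$ are of order $v^{-L/2-1}$; since $L\in\N$ guarantees $L/2+1>1$, the tails are integrable. Finally, a one-line computation yields
\[
\lim_{v\to\infty}\frac{f(v)}{g(v)}=\lim_{v\to\infty}\frac{v}{1+v}=1,
\]
so Lemma~\ref{lemma:HIS} delivers the desired equivalence $\int_0^{\infty}\phi(x;0,v)f(v)\,dv \approx \int_0^{\infty}\phi(x;0,v)g(v)\,dv$.

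I do not foresee any real obstacle: the only step with any delicacy is the integrability check at the tail, which requires $L\geq 1$, and this is supplied by the standing assumption $L\in\N$. Everything else is routine bookkeeping, and no further estimates beyond those built into Lemma~\ref{lemma:HIS} are needed.
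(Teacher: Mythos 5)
Your proposal is correct and follows essentially the same route as the paper: verify positivity, continuity, and integrability of $e^{-1/v}f(v)$ and $e^{-1/v}g(v)$, compute $\lim_{v\to\infty}f(v)/g(v)=1$, and invoke Lemma~\ref{lemma:HIS}. The only cosmetic difference is that you check integrability by splitting the domain at $v=1$ and arguing qualitatively, whereas the paper bounds the integrals explicitly by gamma functions; both verifications are routine and serve the same purpose.
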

\begin{proof}[Proof of Corollary \ref{cor:his}]
These are obviously positive and continuous. Also $e^{-\frac{1}{x}}f(x)$ and $e^{-\frac{1}{x}}g(x)$ are integrable on the real line, since \begin{align*}
 \int_0^{\infty}f(v)e^{-\frac{1}{v}}dv 
 &\leq \int_0^{\infty}\frac{1}{v^{\frac{L}{2}+1}}e^{-\frac{1}{v}} dv+ 
 \int_0^{\infty}\frac{1}{v^{\frac{L}{2}}}e^{-\frac{1}{v}}dv\\
 &\leq \Gamma (\frac{L+1}{2})+\Gamma (\frac{L-1}{2})\\
 &<\infty,
\end{align*}and similarly $\int_0^{\infty}g(v)\phi(x; 0, v)dv<\infty.$ Then, Lemma \ref{lemma:HIS} and $\lim_{v\rightarrow\infty}f(v)/g(v)=1$ lead to the result.
\end{proof}

Based on the above results, we prove Proposition \ref{prop:pri1} $(ii)$ as follows.
\begin{proof}[Proof of Proposition \ref{prop:pri1} $(ii)$]
For simplicity, we fix $\sigma=\tau=1$.
With transformation $\bar{u}= \lambda_t^2$,
\begin{align*}
\pi(\bm{\delta}_t \mid \tau,\sigma) &\propto
\int_{0}^{\infty}\frac{1}{\lambda^L_t(1+\lambda_t^2)}\exp\left\{ -\frac{1}{2\sigma^2\tau^2\lambda_t^2}\bm{\delta}^{\top}_t \Pt^{\top} \Pt \bm{\delta}_t \right\}d\lambda_t\\
& = \frac12 \int_{0}^{\infty} \frac{1}{\bar{u}^{\frac{L}{2}}(1+\bar{u})} \frac{1}{\bar{u}^{\frac12}}\exp \left(-\frac{z^2}{2\bar{u}}\right)d\bar{u},
\end{align*}
where $z = \|\Pt \bm{\delta}_t\|_2$
From Corollary \ref{lemma:HIS}, \begin{align*}
    \frac12 \int_{0}^{\infty} \frac{1}{\bar{u}^{\frac{L}{2}}(1+\bar{u})} \frac{1}{\bar{u}^{\frac12}}\exp \left(-\frac{z^2}{2\bar{u}}\right)d\bar{u} 
    & \approx \frac12 \int_{0}^{\infty} \frac{1}{\bar{u}^{\frac{L}{2}+1}} \frac{1}{\bar{u}^{\frac12}}\exp \left(-\frac{z^2}{2\bar{u}}\right)d\bar{u}\\
    & = \frac12\frac{\Gamma(\frac{L+1}{2})}{(\frac{z^2}{2})^{\frac{L+1}{2}}}\\
    & = O(z^{-L-1}).
\end{align*}
\end{proof}

Next, we prove the weak tail robustness of the posterior mean.
\begin{proof}[Proof of Proposition \ref{prop:tail}]

Also, we define \begin{align*}
    \bm{z} &\equiv (\bm{z}_1^{\top},...,\bm{z}_{T-1}^{\top} )^{\top} \sim N(\bm{\eta}, \Sigma_{z}),\\
    \bm{\eta} &\equiv (\Phi\bm{\delta}_1^{\top},...,\Phi\bm{\delta}_{T-1}^{\top} )^{\top} \sim \prod_{t=1}^{T-1}N(\bm{0},\lambda_t^2 \Sigma_{\delta}),\\
    \lambda_t^2 &\sim \prod_{t=1}^{T-1} \pi(\lambda_t) = \prod_{t=1}^{T-1}\frac{1}{1+\lambda_t^2},
\end{align*}where $\Sigma_z \equiv 2\sigma^2I_n\otimes I_L$ and $\Sigma_{\delta}\equiv\sigma^2\tau^2(\Pt^\top \Pt)^{-1}$.

Then we rewrite the posterior mean by score function. When we denote \begin{align*}
    m(\bm{z}) &= \int N(\bm{z} \mid \bm{\eta}, \Sigma_{z})p(\bm{\eta}) d\bm{\eta}, \\
    \frac{\partial m(\bm{z})}{\partial\bm{z}} &= \int - \Sigma_z(\bm{z}-\bm{\eta})N(\bm{z} \mid \bm{\eta}, \Sigma_{z})p(\bm{\eta}) d\bm{\eta},
\end{align*}
the following representation is available,\begin{align*}
   {\rm E}[\bm{\eta}|\bm{z}] - \bm{z} = \Sigma_{z}\frac{1}{m(\bm{z})}\frac{\partial m(\bm{z})}{\partial\bm{z}}.
\end{align*}
The goal is to find how this behaves as $\bm{z}_t^*\to \infty.$ Hence, we analyze $m(\bm{z})$ first.

\textbf{Step 1: Analysis of} $m(\bm{z})$. With $T= \mathrm{blockdiag} (\lambda_1\Sigma_{\delta}^{1/2},...,\lambda_{T-1}\Sigma_{\delta}^{1/2})$, \begin{align*}
    m(\bm{z})&\propto \int\int    
    \left(\prod_{t=1}^{T-1}\frac{\pi(\lambda_t)}{\lambda_t^L}\right)\exp\left\{ -\frac{1}{2}(\bm{\eta}-\bm{z})^{\top}\Sigma_z^{-1}(\bm{\eta}-\bm{z}) -\frac12 \bm{\eta}^{\top}T^{-2}\bm{\eta} \right\}
    d\bm{\eta}d\bm{\lambda}\\
    & = \int\int    
    \exp\left\{ -\frac{1}{2}\{\bm{\eta}- (\Sigma_z^{-1}+T^{-2})^{-1}\Sigma_z^{-1}\bm{z}  \}^{\top}(\Sigma_z^{-1}+T^{-2})\{\bm{\eta}- (\Sigma_z^{-1}+T^{-2})^{-1}\Sigma_z^{-1}\bm{z}  \} \right\}
    d\bm{\eta}\\
    &\qquad \qquad \times \left(\prod_{t=1}^{T-1}\frac{\pi(\lambda_t)}{\lambda_t^L}\right) \exp\left\{ -\frac{1}{2}\bm{z}^{\top} \Sigma_z^{-1} \{ \Sigma_z -(\Sigma_z^{-1}+T^{-2})^{-1}\} \Sigma_z^{-1} \bm{z}\right\} d\bm{\lambda}\\
    & \propto \int\int    
    N\left(\bm{\eta}\mid (\Sigma_z^{-1}+T^{-2})^{-1}\Sigma_z^{-1}\bm{z}, (\Sigma_z^{-1}+T^{-2})^{-1} \right)
    d\bm{\eta}\\
    &\qquad \qquad \times \frac{1}{|\Sigma_z^{-1}+T^{-2}|^{1/2}}
    \left(\prod_{t=1}^{T-1}\frac{\pi(\lambda_t)}{\lambda_t^L}\right) \exp\left\{ -\frac{1}{2}\bm{z}^{\top} \Sigma_z^{-1} \{ \Sigma_z -(\Sigma_z^{-1}+T^{-2})^{-1}\} \Sigma_z^{-1} \bm{z}\right\} d\bm{\lambda}\\
    &= \int \frac{1}{|\Sigma_z^{-1}+T^{-2}|^{1/2}}
    \left(\prod_{t=1}^{T-1}\frac{\pi(\lambda_t)}{\lambda_t^L}\right) \exp\left\{ -\frac{1}{2}\bm{z}^{\top} \Sigma_z^{-1} \{ \Sigma_z -(\Sigma_z^{-1}+T^{-2})^{-1}\} \Sigma_z^{-1} \bm{z}\right\} d\bm{\lambda}
\end{align*}

From Woodbury matrix identity, we see \begin{align*}
    m(\bm{z}) &\propto \int\frac{\prod_{t=1}^{T-1}\lambda_t^L |\Sigma_{\delta}|}{|I + T^{-1}\Sigma_z^{-1}T^{-1}|^{1/2}}
    \left(\prod_{t=1}^{T-1}\frac{\pi(\lambda_t)}{\lambda_t^L}\right)  \exp\left\{ -\frac{1}{2}\bm{z}^{\top} T^{-1} \{ I+T^{-1}\Sigma_z T^{-1}\}^{-1} T^{-1} \bm{z}\right\} d\bm{\lambda}\\
    &\propto \int \frac{\prod_{t=1}^{T-1}\pi(\lambda_t)}{|I + T^{-1}\Sigma_z^{-1}T^{-1}|^{1/2}}
    \exp\left\{ -\frac{1}{2}\bm{z}^{\top} T^{-1} \{ I+T^{-1}\Sigma_z T^{-1}\}^{-1} T^{-1} \bm{z}\right\} d\bm{\lambda}.
\end{align*}
Hence, we obtain \begin{align}
    \frac{1}{m(\bm{z})}\frac{\partial m(\bm{z})}{\partial\bm{z}} 
    &= \int_{\mathbb{R}_+^{T-1}} \frac{\prod_{t=1}^{T-1}\pi(\lambda_t)}{|I + T^{-1}\Sigma_z^{-1}T^{-1}|^{1/2}}\nonumber
    \exp\left\{ -\frac{1}{2}\bm{z}^{\top} T^{-1} \{ I+T^{-1}\Sigma_z T^{-1}\}^{-1} T^{-1} \bm{z}\right\}\\ \nonumber &\qquad \left\{ -T^{-1} \{ I+T^{-1}\Sigma_z T^{-1}\}^{-1} T^{-1} \bm{z}\right\}d\bm{\lambda}\\ 
    & \qquad / \int_{\mathbb{R}_+^{T-1}}\frac{\prod_{t=1}^{T-1}\pi(\lambda_t)}{|I + T^{-1}\Sigma_z^{-1}T^{-1}|^{1/2}}
    \exp\left\{ -\frac{1}{2}\bm{z}^{\top} T^{-1} \{ I+T^{-1}\Sigma_z T^{-1}\}^{-1} T^{-1} \bm{z}\right\} d\bm{\lambda}\label{eq:st1}
\end{align}

Define $v_t\equiv \frac{\lambda_t}{\|\bm{z}_t\|_2^2}$, $W \equiv \diag(\|\bm{z}_1\|_2^2,\ldots,\|\bm{z}_{T-1}\|_2^2 )$ and $\bm{\omega} \equiv (W^{-1}\otimes I_L)\bm{z}$. With $V\equiv\diag(v_1,\ldots,v_{T-1})$, $T = (VW) \otimes \Sigma_{\delta}^{1/2}$ holds.

Using these new variables, we calculate (\ref{eq:st1}) as \begin{align}
    & \int_{\mathbb{R}_+^{T-1}} \frac{\prod_{t=1}^{T-1}\pi(\|\bm{z}_t\|_2^2 v_t )}{|I + \{(VW) \otimes \Sigma_{\delta}^{1/2}\}\Sigma_z^{-1}\{(VW) \otimes \Sigma_{\delta}^{1/2}\}|^{1/2}} \nonumber \\
    &\qquad \exp\left\{ -\frac{1}{2}\bm{\omega}^{\top} \{V \otimes \Sigma_{\delta}^{1/2}\}^{-1} \{ I+\{(VW) \otimes \Sigma_{\delta}^{1/2}\}^{-1}\Sigma_z \{(VW) \otimes \Sigma_{\delta}^{1/2}\}^{-1}\}^{-1} \{V \otimes \Sigma_{\delta}^{1/2}\}^{-1} \bm{\omega}\right\}   \nonumber  \\ 
    &\qquad \left\{ -\{(VW) \otimes \Sigma_{\delta}^{1/2}\}^{-1} \{ I+\{(VW) \otimes \Sigma_{\delta}^{1/2}\}^{-1}\Sigma_z \{(VW) \otimes \Sigma_{\delta}^{1/2}\}^{-1}\}^{-1} \{V \otimes \Sigma_{\delta}^{1/2}\}^{-1} \bm{\omega}\right\}d\bm{v}  \nonumber  \\
    & \quad / \int_{\mathbb{R}_+^{T-1}}\frac{\prod_{t=1}^{T-1}\pi(\|\bm{z}_t\|_2^2 v_t )}{|I + \{(VW) \otimes \Sigma_{\delta}^{1/2}\}\Sigma_z^{-1}\{(VW) \otimes \Sigma_{\delta}^{1/2}\}|^{1/2}} \nonumber \\
    &\qquad \exp\left\{ -\frac{1}{2}\bm{\omega}^{\top} \{V \otimes \Sigma_{\delta}^{1/2}\}^{-1} \{ I+\{(VW) \otimes \Sigma_{\delta}^{1/2}\}^{-1}\Sigma_z \{(VW) \otimes \Sigma_{\delta}^{1/2}\}^{-1}\}^{-1} \{V \otimes \Sigma_{\delta}^{1/2}\}^{-1} \bm{\omega}\right\} d\bm{v}.\label{eq:st2}
\end{align}

\textbf{Step 2: Limit of the integrand.} 
To evaluate (\ref{eq:st2}) by the dominated convergence theorem, we need the integrable dominating function and limit of the integrand. We calculate the limits in this step and find the dominating function in the next step.
Let arbitrarily chosen index be $t^*\in \{1,...,T-1\}$ and fix $\bm{z}_t$ for $t\neq t^*.$ We observe (\ref{eq:st2}) when $\|\bm{z}_{t^*}\|_2\to \infty.$

We provide the following four approximations. 
\begin{itemize}
    \item[E1:] $\prod_{t=1}^{T-1}\pi(\|\bm{z}_t\|_2^2 v_t ) \sim  \prod_{t\neq t^*}\pi(\|\bm{z}_t\|_2^2 v_t )\pi(\|\bm{z}_{t^*}\|_2^2)/ v_{t^*}^2$
    \item[E2:] $|I + \{(VW) \otimes \Sigma_{\delta}^{1/2}\}\Sigma_z^{-1}\{(VW) \otimes \Sigma_{\delta}^{1/2}\}| \sim
    |W\otimes I_L| | \tilde{\Sigma}+V^2\otimes \Sigma_{\delta}| /|\Sigma_z|$
    \item[E3:] $\exp\left\{ -\frac{1}{2}\bm{\omega}^{\top} \{V \otimes \Sigma_{\delta}^{1/2}\}^{-1} \{ I+\{(VW) \otimes \Sigma_{\delta}^{1/2}\}^{-1}\Sigma_z \{(VW) \otimes \Sigma_{\delta}^{1/2}\}^{-1}\}^{-1} \{V \otimes \Sigma_{\delta}^{1/2}\}^{-1} \bm{\omega}\right\}\sim 
    \exp(-\frac12 \bm{\omega}^{\top} (V^2\otimes\Sigma_{\delta} +\tilde{\Sigma} )^{-1}\bm{\omega})$
    \item[E4:] $\{(VW) \otimes \Sigma_{\delta}^{1/2}\}^{-1} \{ I+\{(VW) \otimes \Sigma_{\delta}^{1/2}\}^{-1}\Sigma_z \{(VW) \otimes \Sigma_{\delta}^{1/2}\}^{-1}\}^{-1} \{V \otimes \Sigma_{\delta}^{1/2}\}^{-1} \bm{\omega} 
    \sim  A(v_{-i}) (I+A(v_{-i})\Sigma A(v_{-i}))^{-1} \{V \otimes \Sigma_{\delta}^{1/2}\}^{-1} \bm{\omega}$,
\end{itemize}
where 
\begin{align*}
    A(v_{-i}) &\equiv \lim_{\|\bm{z}_{t^*}\|_2\to \infty} \{(VW) \otimes \Sigma_{\delta}^{1/2}\}^{-1},\\    
    \tilde{\Sigma} &\equiv \lim_{\|\bm{z}_{t^*}\|_2\to \infty} (W\otimes I_L)^{-1} \Sigma (W\otimes I_L)^{-1}.
\end{align*}
E1, E3 and E4 are obvious, and E2 is immediately obtained from the following relationship \begin{align*}
    &|-\Sigma_z| |I - \{(VW) \otimes \Sigma_{\delta}^{1/2}\} (-\Sigma_z^{-1})\{(VW) \otimes \Sigma_{\delta}^{1/2}\}| \\
    &= \begin{vmatrix}
       I & (VW) \otimes \Sigma_{\delta}^{1/2} \\
       (VW) \otimes \Sigma_{\delta}^{1/2} & -\Sigma \\
      \end{vmatrix}\\
    &=|I| |-\Sigma_z - \{(VW) \otimes \Sigma_{\delta}^{1/2}\} I\{(VW) \otimes \Sigma_{\delta}^{1/2}\}|.
\end{align*}
Then, the limit of the integrand of (\ref{eq:st2}) is a combination of E1~E4.

\textbf{Step 3: Dominating function of the integrand.} Assume $\|\bm{z}_{t^*}\|_2 >> 1.$ We continue (\ref{eq:st2}) as 
\begin{align*}
    & \int_{\mathbb{R}_+^{T-1}} \frac{\prod_{t=1}^{T-1}\pi(\|\bm{z}_t\|_2^2 v_t )}{|I + \{(VW) \otimes \Sigma_{\delta}^{1/2}\}\Sigma_z^{-1}\{(VW) \otimes \Sigma_{\delta}^{1/2}\}|^{1/2}} \nonumber \\
    &\qquad \exp\left\{ -\frac{1}{2}\bm{\omega}^{\top} \{V \otimes \Sigma_{\delta}^{1/2}\}^{-1} \{ I+\{(VW) \otimes \Sigma_{\delta}^{1/2}\}^{-1}\Sigma_z \{(VW) \otimes \Sigma_{\delta}^{1/2}\}^{-1}\}^{-1} \{V \otimes \Sigma_{\delta}^{1/2}\}^{-1} \bm{\omega}\right\}   \nonumber  \\ 
    &\qquad \left\{ -\{(VW) \otimes \Sigma_{\delta}^{1/2}\}^{-1} \{ I+\{(VW) \otimes \Sigma_{\delta}^{1/2}\}^{-1}\Sigma_z \{(VW) \otimes \Sigma_{\delta}^{1/2}\}^{-1}\}^{-1} \{V \otimes \Sigma_{\delta}^{1/2}\}^{-1} \bm{\omega}\right\}d\bm{v}  \nonumber  \\
    & \quad / \int_{\mathbb{R}_+^{T-1}}\frac{\prod_{t=1}^{T-1}\pi(\|\bm{z}_t\|_2^2 v_t )}{|I + \{(VW) \otimes \Sigma_{\delta}^{1/2}\}\Sigma_z^{-1}\{(VW) \otimes \Sigma_{\delta}^{1/2}\}|^{1/2}} \nonumber \\
    &\qquad \exp\left\{ -\frac{1}{2}\bm{\omega}^{\top} \{V \otimes \Sigma_{\delta}^{1/2}\}^{-1} \{ I+\{(VW) \otimes \Sigma_{\delta}^{1/2}\}^{-1}\Sigma_z \{(VW) \otimes \Sigma_{\delta}^{1/2}\}^{-1}\}^{-1} \{V \otimes \Sigma_{\delta}^{1/2}\}^{-1} \bm{\omega}\right\} d\bm{v}\\
    &= \int_{\mathbb{R}_+^{T-1}} \frac{\prod_{t=1}^{T-1}\pi(\|\bm{z}_t\|_2^2 v_t ) /\pi (\|\bm{z}_t\|_2^2) }{| (W^{-1}\otimes I_L)\Sigma(W^{-1}\otimes I_L) + \{V^2\otimes \Sigma_{\delta} \}|^{1/2}} \nonumber \\
    &\qquad \exp\left\{ -\frac{1}{2}\bm{\omega}^{\top} \{V \otimes \Sigma_{\delta}^{1/2}\}^{-1} \{ I+\{(VW) \otimes \Sigma_{\delta}^{1/2}\}^{-1}\Sigma_z \{(VW) \otimes \Sigma_{\delta}^{1/2}\}^{-1}\}^{-1} \{V \otimes \Sigma_{\delta}^{1/2}\}^{-1} \bm{\omega}\right\}   \nonumber  \\ 
    &\qquad \left\{ -\{(VW) \otimes \Sigma_{\delta}^{1/2}\}^{-1} \{ I+\{(VW) \otimes \Sigma_{\delta}^{1/2}\}^{-1}\Sigma_z \{(VW) \otimes \Sigma_{\delta}^{1/2}\}^{-1}\}^{-1} \{V \otimes \Sigma_{\delta}^{1/2}\}^{-1} \bm{\omega}\right\}d\bm{v}  \nonumber  \\
    & \quad / \int_{\mathbb{R}_+^{T-1}}\frac{\prod_{t=1}^{T-1}\pi(\|\bm{z}_t\|_2^2 v_t ) /\pi (\|\bm{z}_t\|_2^2)}{| (W^{-1}\otimes I_L)\Sigma(W^{-1}\otimes I_L) + \{V^2\otimes \Sigma_{\delta} \}|^{1/2}} \nonumber \\
    &\qquad \exp\left\{ -\frac{1}{2}\bm{\omega}^{\top} \{V \otimes \Sigma_{\delta}^{1/2}\}^{-1} \{ I+\{(VW) \otimes \Sigma_{\delta}^{1/2}\}^{-1}\Sigma_z \{(VW) \otimes \Sigma_{\delta}^{1/2}\}^{-1}\}^{-1} \{V \otimes \Sigma_{\delta}^{1/2}\}^{-1} \bm{\omega}\right\} d\bm{v}\\
    &= \int_{\mathbb{R}_+^{T-1}} h(\bm{v}, z_{t^*}) \left\{ -\{(VW) \otimes \Sigma_{\delta}^{1/2}\}^{-1} \{ I+\{(VW) \otimes \Sigma_{\delta}^{1/2}\}^{-1}\Sigma_z \{(VW) \otimes \Sigma_{\delta}^{1/2}\}^{-1}\}^{-1}  \{V \otimes \Sigma_{\delta}^{1/2}\}^{-1} \bm{\omega}\right\}\\ 
    &\quad d\bm{v}/ \int_{\mathbb{R}_+^{T-1}}h(\bm{v}, z_{t^*})  d\bm{v}, 
\end{align*}where \begin{align*}
    h(\bm{v}, z_{t^*})  \equiv& \frac{\prod_{t=1}^{T-1}\pi(\|\bm{z}_t\|_2^2 v_t ) /\pi (\|\bm{z}_t\|_2^2)}{| (W^{-1}\otimes I_L)\Sigma(W^{-1}\otimes I_L) + \{V^2\otimes \Sigma_{\delta} \}|^{1/2}} \\ 
    &\exp\left\{ -\frac{1}{2}\bm{\omega}^{\top} \{V \otimes \Sigma_{\delta}^{1/2}\}^{-1} \{ I+\{(VW) \otimes \Sigma_{\delta}^{1/2}\}^{-1}\Sigma_z \{(VW) \otimes \Sigma_{\delta}^{1/2}\}^{-1}\}^{-1} \{V \otimes \Sigma_{\delta}^{1/2}\}^{-1} \bm{\omega}\right\}.
\end{align*}

Since there exists $\varepsilon_1<1,\varepsilon_2<1,M_1$ and $M_2$ such that $\varepsilon_1<\Sigma_z<M_1$ and $\varepsilon_2<\Sigma_{\delta}<M_2$, we have 
\begin{align*}
    \prod_{t=1}^{T-1} \frac{\pi(\|\bm{z}_t\|_2^2 v_t ) }{\pi (\|\bm{z}_t\|_2^2) }
    = \prod_{t=1}^{T-1} \frac{1+\|\bm{z}_t\|_2^4}{1+\|\bm{z}_t\|_2^4v_t^2}
    =\prod_{t\neq t^*}  \frac{1+\|\bm{z}_t\|_2^4}{1+\|\bm{z}_t\|_2^4v_t^2}\times \frac{2\|\bm{z}_{t^*}\|_2^4}{1+\|\bm{z}_{t^*}\|_2^4v_t^2},
\end{align*}
\begin{align*}
    \exp\left\{ -\frac{1}{2} \bm{\omega}^{\top} \{ M_2 (V^2\otimes I_L) + M_1 (W^{-2}\otimes I_L) \}^{-1} \bm{\omega}\right\}
    &= \exp\left\{ -\frac{1}{2} \sum_{t=1}^{T-1} \{ M_2 v_t^{2L} + M_1 \|\bm{z}_t\|_2^{-2L} \}^{-1} \bm{\omega}\right\}\\
    &=\prod_{t=1}^{T-1} \exp\left\{ -\frac{1}{2}  \frac{\|\bm{z}_t\|_2^{4L}}{M_2 v_t^{4L}\|\bm{z}_t\|_2^{2L}+M_1 } \right\},
\end{align*}
and \begin{align*}
    | (W^{-1}\otimes I_L)\Sigma(W^{-1}\otimes I_L) + V^2\otimes \Sigma_{\delta} |^{-1/2}
    &\le | (\varepsilon_1 W^{-2}+ \varepsilon_2 V^2 )\otimes I_L|^{-1/2}\\
    &= \prod_{t=1}^{T-1} \left( \frac{\|\bm{z}_t\|_2^4}{\varepsilon_1+\varepsilon_2\|\bm{z}_t\|_2^4 v_t^2 }\right)^{L/2} .
\end{align*}
Hence we obtain \begin{align*}
    h(\bm{v}, z_{t^*}) \le&  
    \left(\prod_{t\neq t^*}  \frac{\|\bm{z}_t\|_2^4}{\varepsilon_1+\varepsilon_2\|\bm{z}_t\|_2^4 v_t^2 } \frac{1+\|\bm{z}_t\|_2^4}{1+\|\bm{z}_t\|_2^4v_t^2}\exp\left\{ -\frac{1}{2}  \frac{\|\bm{z}_t\|_2^{4L}}{M_2 v_t^{2L}\|\bm{z}_t\|_2^{4L}+M_1 } \right\}\right)\\
    &\times    \left( \frac{\|\bm{z}_{t^*}\|_2^4}{\varepsilon_1+\varepsilon_2\|\bm{z}_{t^*}\|_2^4 v_{t^*}^2 }\right)^{L/2}
    \frac{2\|\bm{z}_{t^*}\|_2^4}{1+\|\bm{z}_{t^*}\|_2^4v_{t^*}^2}
    \exp\left\{ -\frac{1}{2}  \frac{\|\bm{z}_{t^*}\|_2^{4L}}{M_2 v_{t^*}^{2L}\|\bm{z}_{t^*}\|_2^{4L}+M_1 } \right\}.
\end{align*}
If $v_{t^*} >1$, it holds that\begin{align*}
    \left( \frac{\|\bm{z}_{t^*}\|_2^4}{\varepsilon_1+\varepsilon_2\|\bm{z}_{t^*}\|_2^4 v_{t^*}^2 }\right)^{L/2}
    \frac{2\|\bm{z}_{t^*}\|_2^4}{1+\|\bm{z}_{t^*}\|_2^4v_{t^*}^2}
    \exp\left\{ -\frac{1}{2}  \frac{\|\bm{z}_{t^*}\|_2^{4L}}{M_2 v_{t^*}^{2L}\|\bm{z}_{t^*}\|_2^{4L}+M_1 } \right\} 
    \le \left( \frac{1}{\varepsilon_2v^2_{t^*}} \right)^{L/2}\left( \frac{2}{v^2_{t^*}} \right)^{1/2},
\end{align*}otherwise, it holds that \begin{align*}
    &\left( \frac{\|\bm{z}_{t^*}\|_2^4}{\varepsilon_1+\varepsilon_2\|\bm{z}_{t^*}\|_2^4 v_{t^*}^2 }\right)^{L/2}
    \frac{2\|\bm{z}_{t^*}\|_2^4}{1+\|\bm{z}_{t^*}\|_2^4v_{t^*}^2}
    \exp\left\{ -\frac{1}{2}  \frac{\|\bm{z}_{t^*}\|_2^{4L}}{M_2 v_{t^*}^{2L}\|\bm{z}_{t^*}\|_2^{4L}+M_1 } \right\} \\
    &\le \left(  \frac{M_1}{\varepsilon_1} \frac{\|\bm{z}_{t^*}\|_2^4}{M_1+\varepsilon_2\|\bm{z}_{t^*}\|_2^4v_{t^*}^2}  \right)^{L/2} \left(\frac{2M_1\|\bm{z}_{t^*}\|_2^4}{M_1+\varepsilon_2\|\bm{z}_{t^*}\|_2^4v_{t^*}^2}  \right)
    \exp\left\{ -\frac{1}{2} \frac{\varepsilon_2}{M_2} \frac{\|\bm{z}_{t^*}\|_2^{4}}{\varepsilon_2v_{t^*}^{2}\|\bm{z}_{t^*}\|_2^{4}+M_1 } \right\} \\
    &\le \left(  \frac{M_1}{\varepsilon_1}  \right)^{L/2} 2M_1 \sup_{u\in (0,\infty)} u^{(L+2)/2}\exp\left(-\frac{\varepsilon_2}{M_2}\frac{-u}{2}\right)\equiv M_3<\infty
\end{align*}
Thus, we get\begin{align}
        h(\bm{v}, z_{t^*}) \le&  
    \left(\prod_{t\neq t^*}  \frac{\|\bm{z}_t\|_2^4}{\varepsilon_1+\varepsilon_2\|\bm{z}_t\|_2^4 v_t^2 } \frac{1+\|\bm{z}_t\|_2^4}{1+\|\bm{z}_t\|_2^4v_t^2}\exp\left\{ -\frac{1}{2}  \frac{\|\bm{z}_t\|_2^{4L}}{M_2 v_t^{2L}\|\bm{z}_t\|_2^{4L}+M_1 } \right\}\right)\nonumber \\
    &\times   \left(1_{\{v_{t^*} \le 1\}}M_3  +1_{\{v_{t^*} >1\}}\left( \frac{1}{\varepsilon_2v^2_{t^*}} \right)^{L/2}\left( \frac{2}{v^2_{t^*}} \right)^{1/2}\right).\label{eq:st3}
\end{align}
The right-hand side is integrable and does not depend on $\bm{z}_{t^*}$.

Next, we consider
\begin{align*}
    h(\bm{v}, z_{t^*}) \left\{ -\{(VW) \otimes \Sigma_{\delta}^{1/2}\}^{-1} \{ I+\{(VW) \otimes \Sigma_{\delta}^{1/2}\}^{-1}\Sigma_z \{(VW) \otimes \Sigma_{\delta}^{1/2}\}^{-1}\}^{-1}  \{V \otimes \Sigma_{\delta}^{1/2}\}^{-1} \bm{\omega}\right\}.
\end{align*}
For large $M_4$, it holds that \begin{align*}
    &\|\{(VW) \otimes \Sigma_{\delta}^{1/2}\}^{-1} \{ I+\{(VW) \otimes \Sigma_{\delta}^{1/2}\}^{-1}\Sigma_z \{(VW) \otimes \Sigma_{\delta}^{1/2}\}^{-1}\}^{-1}  \{V \otimes \Sigma_{\delta}^{1/2}\}^{-1} \bm{\omega} \|
    \\
    &\le \|W^{-1}\otimes I_L\|  \|\{ V^2 \otimes \Sigma_{\delta}  +\{W \otimes I_L\}^{-1}\Sigma_z \{W \otimes I_L\}^{-1} \}^{-1} \| \| \bm{\omega} \|\\
    &\le \sqrt{\sum_{t=1}^{T-1} \frac{L(T-1)}{\|\bm{z}_t\|_2^4} }\sum_{i=1}^{L(T-1)}\sum_{j=1}^{L(T-1)}
    |\bm{e}_i^{\top}\{ V^2 \otimes \Sigma_{\delta}  +\{W \otimes I_L\}^{-1}\Sigma_z \{W \otimes I_L\}^{-1} \}^{-1}\bm{e}_j  |
    \\
    &\le \sqrt{\sum_{t=1}^{T-1} \frac{L(T-1)}{\|\bm{z}_t\|_2^4} }\sum_{i=1}^{L(T-1)}\sum_{j=1}^{L(T-1)}
    \frac{|\bm{e}_i^{\top} \{(\varepsilon_1 W^{-2}+ \varepsilon_2 V^2 )\otimes I_L\}^{-1} \bm{e}_i  |+|\bm{e}_j^{\top} \{(\varepsilon_1 W^{-2}+ \varepsilon_2 V^2 )\otimes I_L\}^{-1} \bm{e}_j  |}{2}\\
    &\le \sqrt{\sum_{t=1}^{T-1} \frac{L(T-1)}{\|\bm{z}_t\|_2^4} } \frac{L}{2}\sum_{i=1}^{T-1}\sum_{j=1}^{T-1}
    \left\{  \left(\frac{\varepsilon_1}{\|\bm{z}_i\|_2^4} +\varepsilon_2v_i^2\right)^{-1} + \left(\frac{\varepsilon_1}{\|\bm{z}_j\|_2^4} +\varepsilon_2v_j^2\right)^{-1} \right\} \\
    &\le \sqrt{\sum_{t=1}^{T-1} \frac{L(T-1)}{\|\bm{z}_t\|_2^4} } L(T-1)\sum_{t=1}^{T-1}\frac{\|\bm{z}_t\|_2^4}{\varepsilon_1 + \varepsilon_2v_i^2\|\bm{z}_t\|_2^4 }\\
    &\le  M_4 \sum_{t=1}^{T-1}\frac{\|\bm{z}_t\|_2^4}{\varepsilon_1 + \varepsilon_2v_t^2\|\bm{z}_t\|_2^4 }.
\end{align*}
Thus, we have \begin{align*}
     &\|h(\bm{v}, z_{t^*}) \{(VW) \otimes \Sigma_{\delta}^{1/2}\}^{-1} \{ I+\{(VW) \otimes \Sigma_{\delta}^{1/2}\}^{-1}\Sigma_z \{(VW) \otimes \Sigma_{\delta}^{1/2}\}^{-1}\}^{-1}  \{V \otimes \Sigma_{\delta}^{1/2}\}^{-1} \bm{\omega} \| \\
     & \le M_4 \left(\sum_{t'\neq t^*} \frac{\|\bm{z}_t'\|_2^4}{\varepsilon_1 + \varepsilon_2v_{t'}^2\|\bm{z}_{t'}\|_2^4 } +\frac{\|\bm{z}_{t^*}\|_2^4}{\varepsilon_1 + \varepsilon_2v_{t^*}^2\|\bm{z}_{t^*}\|_2^4 } \right)\\
     &\qquad \times 
    \left(\prod_{t\neq t^*} \left( \frac{\|\bm{z}_t\|_2^4}{\varepsilon_1+\varepsilon_2\|\bm{z}_t\|_2^4 v_t^2 } \right)^{L/2}\frac{1+\|\bm{z}_t\|_2^4}{1+\|\bm{z}_t\|_2^4v_t^2}\exp\left\{ -\frac{1}{2}  \frac{\|\bm{z}_t\|_2^{4L}}{M_2 v_t^{2L}\|\bm{z}_t\|_2^{4L}+M_1 } \right\}\right)\\
     &\qquad \times    \left( \frac{\|\bm{z}_{t^*}\|_2^4}{\varepsilon_1+\varepsilon_2\|\bm{z}_{t^*}\|_2^4 v_{t^*}^2 }\right)^{L/2}
    \frac{2\|\bm{z}_{t^*}\|_2^4}{1+\|\bm{z}_{t^*}\|_2^4v_{t^*}^2}
    \exp\left\{ -\frac{1}{2}  \frac{\|\bm{z}_{t^*}\|_2^{4L}}{M_2 v_{t^*}^{2L}\|\bm{z}_{t^*}\|_2^{4L}+M_1 } \right\}\\
    & \le M_4 \sum_{t'\neq t^*}\frac{\|\bm{z}_{t'}\|_2^4}{\varepsilon_1 + \varepsilon_2v_{t'}^2\|\bm{z}_{t'}\|_2^4 } 
    \Big(\prod_{t\neq t^*}  \left( \frac{\|\bm{z}_t\|_2^4}{\varepsilon_1+\varepsilon_2\|\bm{z}_t\|_2^4 v_t^2 } \right)^{L/2+1_{\{t=t'\}}} \frac{1+\|\bm{z}_t\|_2^4}{1+\|\bm{z}_t\|_2^4v_t^2}\exp\left\{ -\frac{1}{2}  \frac{\|\bm{z}_t\|_2^{4L}}{M_2 v_t^{2L}\|\bm{z}_t\|_2^{4L}+M_1 } \right\}\Big)\\
     &\qquad \times    \left( \frac{\|\bm{z}_{t^*}\|_2^4}{\varepsilon_1+\varepsilon_2\|\bm{z}_{t^*}\|_2^4 v_{t^*}^2 }\right)^{L/2}
    \frac{2\|\bm{z}_{t^*}\|_2^4}{1+\|\bm{z}_{t^*}\|_2^4v_{t^*}^2}
    \exp\left\{ -\frac{1}{2}  \frac{\|\bm{z}_{t^*}\|_2^{4L}}{M_2 v_{t^*}^{2L}\|\bm{z}_{t^*}\|_2^{4L}+M_1 } \right\}\\
    &\quad+M_4    \left(\prod_{t\neq t^*}  \left(\frac{\|\bm{z}_t\|_2^4}{\varepsilon_1+\varepsilon_2\|\bm{z}_t\|_2^4 v_t^2 }\right)^{L/2} \frac{1+\|\bm{z}_t\|_2^4}{1+\|\bm{z}_t\|_2^4v_t^2}\exp\left\{ -\frac{1}{2}  \frac{\|\bm{z}_t\|_2^{4L}}{M_2 v_t^{2L}\|\bm{z}_t\|_2^{4L}+M_1 } \right\}\right)\\
     &\qquad \times    \left( \frac{\|\bm{z}_{t^*}\|_2^4}{\varepsilon_1+\varepsilon_2\|\bm{z}_{t^*}\|_2^4 v_{t^*}^2 }\right)^{L/2+1}
    \frac{2\|\bm{z}_{t^*}\|_2^4}{1+\|\bm{z}_{t^*}\|_2^4v_{t^*}^2}
    \exp\left\{ -\frac{1}{2}  \frac{\|\bm{z}_{t^*}\|_2^{4L}}{M_2 v_{t^*}^{2L}\|\bm{z}_{t^*}\|_2^{4L}+M_1 } \right\}\\
     & \le M_4 \sum_{t'\neq t^*}\frac{\|\bm{z}_{t'}\|_2^4}{\varepsilon_1 + \varepsilon_2v_{t'}^2\|\bm{z}_{t'}\|_2^4 } 
    \Big(\prod_{t\neq t^*}  \left( \frac{\|\bm{z}_t\|_2^4}{\varepsilon_1+\varepsilon_2\|\bm{z}_t\|_2^4 v_t^2 } \right)^{L/2+1_{\{t=t'\}}} \frac{1+\|\bm{z}_t\|_2^4}{1+\|\bm{z}_t\|_2^4v_t^2}\exp\left\{ -\frac{1}{2}  \frac{\|\bm{z}_t\|_2^{4L}}{M_2 v_t^{2L}\|\bm{z}_t\|_2^{4L}+M_1 } \right\}\Big)\\
     &\qquad \times \left(1_{\{v_{t^*} \le 1\}}M_3  +1_{\{v_{t^*} >1\}}\left( \frac{1}{\varepsilon_2v^2_{t^*}} \right)^{L/2}\left( \frac{2}{v^2_{t^*}} \right)^{1/2}\right)\\
    &\quad+M_4    \left(\prod_{t\neq t^*}  \left(\frac{\|\bm{z}_t\|_2^4}{\varepsilon_1+\varepsilon_2\|\bm{z}_t\|_2^4 v_t^2 }\right)^{L/2} \frac{1+\|\bm{z}_t\|_2^4}{1+\|\bm{z}_t\|_2^4v_t^2}\exp\left\{ -\frac{1}{2}  \frac{\|\bm{z}_t\|_2^{4L}}{M_2 v_t^{2L}\|\bm{z}_t\|_2^{4L}+M_1 } \right\}\right)\\
     &\qquad \times    \left(1_{\{v_{t^*} \le 1\}}M_5  +1_{\{v_{t^*} >1\}}\left( \frac{1}{\varepsilon_2v^2_{t^*}} \right)^{L/2}\left( \frac{2}{v^2_{t^*}} \right)^{1/2}\right),
\end{align*}
for some $M_5>0$. The last one is integrable and independent of $\bm{z}_{t^*}$. Combining this and (\ref{eq:st3}) shows the existence of an integrable function that dominates the integrand in (\ref{eq:st2}).

\textbf{Step 4: Conclusion.} 
Step 3 shows the dominated convergence theorem is applicable to (\ref{eq:st2}). Then, from the dominated convergence theorem and Step 2,  we obtain \begin{align*}
    \frac{1}{m(\bm{z})}\frac{\partial m(\bm{z})}{\partial\bm{z}} \sim -\Sigma_z \frac{\int h(\bm{v}, \bm{z}_{t^*})\{-A(v_{-i}) (I+A(v_{-i})\Sigma A(v_{-i}))^{-1} \{V \otimes \Sigma_{\delta}^{1/2}\}^{-1} \bm{\omega}\} d\bm{v}}{\int h(\bm{v}, z_{t^*})d\bm{v}},
\end{align*}
as $\bm{z}_{t^*}\to\infty$. Therefore ${\rm E}[\bm{\eta}|\bm{z}] - \bm{z} $ is bounded and this concludes the proof.

\end{proof}

\bibliographystyle{chicago}
\bibliography{ref}

\end{document}